\newcommand{\vin}{v_{in}}
\newtheorem{property}{Property}
\makeatletter\newcommand{\algmargin}{\the\ALG@thistlm}
\algnewcommand{\parState}[1]{
    \parbox[t]{\dimexpr\linewidth-\algmargin}{\strut\hangindent=\algorithmicindent \hangafter=1 #1\strut}}
\algnewcommand\algorithmicas{\textbf{as}}
\keywords{Byzantine Agreement, Communication Complexity, Adaptive Communication Complexity, Resilience}
\author{Andrei Constantinescu}
\affiliation{%
  \institution{Category Labs \& ETH Zurich}
  \country{Switzerland}
}
\email{aconstantine@ethz.ch}
\author{Marc Dufay}
\affiliation{%
  \institution{ETH Zurich}
   \country{Switzerland}
}
\email{mdufay@ethz.ch}
\author{Anton Paramonov}
\affiliation{%
  \institution{ETH Zurich}
   \country{Switzerland}
}
\email{aparamonov@ethz.ch}
\author{Roger Wattenhofer}
\affiliation{%
  \institution{ETH Zurich}
   \country{Switzerland}
}
\email{wattenhofer@ethz.ch}
\begin{document}
\title{From Few to Many Faults: Optimal Adaptive Byzantine Agreement}

\begin{abstract}
    Achieving agreement among distributed parties is a fundamental task in modern systems, underpinning applications such as consensus in blockchains, coordination in cloud infrastructure, and fault tolerance in critical services. However, this task can be intensive, often requiring a large number of messages to be exchanged as well as many rounds of communications, especially in the presence of Byzantine faults. This makes efficiency a central challenge in the design of practical agreement protocols.
    
    In this paper, we study the problem of Binary Agreement and give protocols that are simultaneously optimal in both message and round complexity, parameterized by the \emph{actual} number of Byzantine faults. In contrast to previous works, we demonstrate that optimal message complexity can be achieved without sacrificing latency. Concretely, for a system of \(n\) parties tolerating up to \(t\) Byzantine faults, out of which only \(f \leq t\) are actually faulty, we give the following results:

    \begin{itemize}
        \item When $t = \Omega(n)$, in the synchronous (resp. partially synchronous) setting, with optimal resiliency $t < n/2$ (resp. $t < n/3$), we describe a deterministic protocol with optimal communication complexity $\mathcal{O}(n \cdot (f+1))$ and optimal round complexity $\mathcal{O}(f + 1)$.
        \item Building upon this previous result, when $t = o(n)$, for both the synchronous and partially synchronous setting, we describe a deterministic protocol with near-optimal communication complexity $\widetilde{\mathcal{O}}(n + t\cdot f)$ and near-optimal round complexity $\widetilde{\mathcal{O}}(f+1)$. Our approach relies on a novel use of dispersers to efficiently disseminate a value.
        \item For the asynchronous setting, we show a $\Omega(n + t^2)$ lower bound in expectation and provide a randomized protocol with near-optimal $\widetilde{\mathcal{O}}(n + t^2)$ communication complexity and $\mathcal{O}(1)$ round complexity in expectation.
    \end{itemize}
    
    
\end{abstract}

\maketitle

\newpage
\pagenumbering{arabic}  
\setcounter{page}{1}

\section{Introduction}

Achieving agreement in a distributed setting is fundamental to blockchain systems. Modern blockchains often comprise thousands of nodes attempting to reach consensus \cite{Ethernodes2025, SolanaValidators2025, CardanoStakePools2025, AvalancheValidators2025}. Under the widely accepted resilience model in which up to $t \approx n/3$ validators may be Byzantine, and in light of the classical Dolev--Reischuk lower bound of $\Omega(t^2)$ messages for agreement \cite{PODC:DolRei82}, such systems must exchange millions of messages. This communication burden translates into high latency and limited scalability.

This work tackles the problem through two key contributions.

First, we design \emph{adaptive}\footnote{Not to be confused with algorithms tolerating an adaptive adversary.} algorithms, i.e., algorithms whose performance improves as the actual number of faults in a given execution, $f$, decreases (while $t$ naturally stays fixed). In particular, we present the first synchronous and partially synchronous Byzantine Agreement algorithms that simultaneously achieve adaptive communication $\mathcal{O}(nf)$ and adaptive latency $\mathcal{O}(f)$,\footnote{We note that, when we write $\mathcal{O}(nf)$ or $\mathcal{O}(f)$ we do not imply that with $f = 0$ our algorithm results in $0$ communication and $0$ rounds. Not to overload notation with $(f+1)$-s everywhere, we will assume that $f$ is at least $1$ in all the complexities.} while maintaining optimal resilience. This is significantly lighter than the worst-case $\Omega(t^2) = \Omega(n^2)$ communication suggested by \cite{PODC:DolRei82}, and is especially relevant in practice, where the number of misbehaving nodes is almost always negligibly small.

Second, we argue that systems' robustness is governed not by the relative \emph{fraction} of corruptions tolerated, but by the \emph{absolute} number of failures a system can withstand. For example, it may be easier for an adversary to corrupt a system with $3{,}001$ nodes that tolerates $1{,}000$ faults than one with $10{,}000$ nodes tolerating $2{,}000$ faults. Motivated by this perspective, we design an algorithm that decouples $n$ and $t$, yielding improved performance when $t \ll n$ and enabling unprecedented scaling in the total number of nodes. To get a taste of it, imagine a protocol achieving $\mathcal{O}(n + t^2)$ communication. Using it, system designers can first choose a failure threshold $t$ such that disseminating $t^2$ messages remains feasible, and subsequently scale $n$ up to $t^2$ without incurring additional asymptotic communication overhead.

Combining these two insights, we circumvent the Dolev--Reischuk lower bound and obtain the first synchronous and partially synchronous Byzantine Agreement algorithms with near-optimal communication complexity $\widetilde{\mathcal{O}}(n + t \cdot f)$ and round complexity $\widetilde{\mathcal{O}}(f)$, where $\widetilde{\mathcal{O}}$ hides polylogarithmic factors.

We also contribute to asynchronous agreement. Whereas existing approaches typically incur $\mathcal{O}(n \cdot t)$ communication, our asynchronous algorithm achieves communication complexity $\widetilde{\mathcal{O}}(n + t^2)$.

Finally, we show that adaptive communication is impossible in the asynchronous setting. Specifically, we prove an $\Omega(n + t^2)$ lower bound for any algorithm that guarantees almost-sure termination. This result implies that our asynchronous protocol is optimal up to logarithmic factors.

\section{Related Work}

\paragraph*{Synchrony.} Optimizing the communication complexity of Byzantine Agreement (BA) has been the focus of extensive research, particularly in synchronous networks.

Cohen et al. \cite{PODC:CohKeiSpi22} present a deterministic solution for Weak Byzantine Agreement and Byzantine Broadcast (BB) in synchronous settings, achieving a communication complexity of $\mathcal{O}(n\cdot f)$. Consequently, Elsheimy et al. \cite{SODA:ElsTsiPap24} and Civit et al. \cite{civit2023strong} independently developed deterministic protocols for Byzantine Agreement with $\mathcal{O}(n \cdot f)$ communication complexity, maintaining resilience $t < (\frac{1}{2} - \varepsilon)n$. 

This line of research culminated in the work of Civit et al. \cite{PODC:CDGGKV24}, which extends to Multi-Valued Byzantine Agreement (MVBA), where processes can propose arbitrary values and not just $0$ or $1$, and Interactive Consistency which ensures that the output is a vector of proposals from honest processes, while also supporting External Validity, meaning the decided value must satisfy a predetermined predicate. The bit communication complexity of this protocol is $\mathcal{O}(L_o n + n(f + 1)\kappa)$, where $L_o$ is the bit-length of the output (e.g., $L_o = 1$ for Binary BA), and $\kappa$ is a security parameter. Notably, \cite{PODC:CDGGKV24} uses a heavy cryptography primitive called Multiverse Threshold Signature Scheme (MTSS) \cite{baird2023threshold,garg2024hints} and poses the question of whether adaptive BA can be implemented with lighter cryptography, which the present work answers affirmatively.  

All aforementioned protocols have a round complexity of $\Omega(n)$ in the worst case, even if $f$ is small. Another area of research is early stopping protocols, like the one by Lenzen and Sheikholeslami \cite{lenzen2022recursive}, which solves BA with round complexity $\mathcal{O}(f)$, having, however, $\mathcal{O}(n \cdot t)$ communication complexity and $t < n/3$ resilience. We remark that early stopping protocols focus on the stopping time, i.e the time for all honest parties to exit the protocol. This is a stronger concept compared to this paper, where round complexity characterizes the time for all honest parties to decide a value, because a party may still send messages after deciding. Independently and in parallel with our work, Blum, Lenzen and Loss gave a randomized early-stopping protocol with $\mathcal{O}(n \cdot f)$ round complexity \cite{blum2026earlyadaptive}. 

On the impossibility front, the classical lower bound by Dolev and Reischuk \cite{PODC:DolRei82} establishes that any synchronous algorithm solving Byzantine Agreement must incur a message complexity of $\Omega(f^2)$. This foundational result was further refined by Spiegelman \cite{spiegelman2020search}, who demonstrated a lower bound of $\Omega(n + tf)$. Expanding on these results, Abraham et al. \cite{PODC:ACDNPR19} proved that even randomized algorithms are constrained by the $\Omega(f^2)$ message complexity in synchronous settings when facing a highly adaptive adversary, one capable of performing after-the-fact removal of messages, i.e when an adversary can see which messages are about to be sent and use this information to decide which parties to corrupt and modify their messages.

As for round complexity, the classical result by Dolev, Reischuk and Strong \cite{dolev1990early} establishes the lower bound of $\min\{f + 2, t + 1\}$ rounds for achieving byzantine agreement.

\paragraph*{Partial Synchrony.} Research on Byzantine Agreement in partially synchronous networks has led to significant advancements. Civit et al. \cite{civit2022byzantine} were the first to achieve $\mathcal{O}(n^2)$ communication complexity for BA in this setting. Later, Civit et al. \cite{civit2025partial} proposed the Oper framework, which provides a general method for transforming any synchronous BA protocol into a partially synchronous one while preserving the worst-case per-process bit complexity. However, when applied to the protocol from, e.g., \cite{PODC:CDGGKV24}, Oper only guarantees a total communication complexity of $\mathcal{O}(n^2)$ rather than the adaptive $\mathcal{O}(n \cdot f)$ due to the $\Omega(n)$ per-process bit complexity in the original protocol.

The worst case round complexity of \cite{civit2022byzantine} and \cite{PODC:CDGGKV24} is $\Theta(t)$ and $\Theta(n)$ respectively. 

Beyond standard Byzantine Agreement, State Machine Replication (SMR) is a closely related problem that requires External Validity. One of the prominent SMR protocols in partial synchrony is HotStuff \cite{PODC:YMRGA19}. HotStuff progresses through a sequence of views, each led by a designated leader, and agreement is reached once all honest processes stay in the same view for a sufficient duration. The complexity of achieving view synchronization in HotStuff heavily depends on the clock assumptions of the participating processes. If processes have perfectly synchronized clocks, views can be defined as fixed time intervals, making synchronization trivial. However, in practical settings, clock synchronization may be imperfect. Although HotStuff \cite{PODC:YMRGA19} does not explicitly specify a clock model, it relies on exponential back-off as a mechanism to mitigate clock deviations—an approach that is theoretically sound but practically inefficient. Recent work on view synchronization \cite{lewis-pye2023optimal, lewis2022quadratic, lewis2023fever} introduces a more robust method, ensuring consistent views among processes under partial initial clock synchronization. This approach achieves an $\mathcal{O}(n \cdot f)$ communication complexity and, when integrated with HotStuff, allows the protocol to maintain the same complexity in partial synchrony. Notably, under the assumption of perfectly synchronized clocks, one-shot HotStuff directly achieves $\mathcal{O}(n \cdot f)$ communication complexity without any additional synchronization overhead.

Finally, Spiegelman et al. \cite{spiegelman2020search} established that in partially synchronous networks, any protocol must incur an unbounded number of messages if messages sent before the Global Stabilization Time (GST) are counted. Therefore, when talking about the communication complexity of partially synchronous protocols, we refer only to the messages sent after GST.

\paragraph*{Asynchrony.} The well-known FLP impossibility result \cite{fischer1985impossibility} establishes that Byzantine Agreement cannot be deterministically solved in asynchronous networks if even a single process can fail. However, the use of randomness provides a way to circumvent this limitation. 

For example, Cachin et al. \cite{JC:CacKurSho05} present a protocol that achieves strong BA in asynchrony, ensuring Agreement and Validity unconditionally while guaranteeing Termination with probability 1. This protocol operates with a communication complexity of $\mathcal{O}(n^2)$. 

Further improving on message complexity while allowing a negligible probability of failure, Cohen et al. \cite{cohen2020not} propose a solution for binary BA in asynchrony. Their protocol achieves Agreement, Validity, and Termination with high probability (w.h.p.) and has an expected communication complexity of $\mathcal{O}(n \cdot \log^2 n)$. In parallel, Blum et al. \cite{TCC:BKLL20} were able to achieve BA in asynchrony with overwhelming probability and expected communication complexity $\mathcal{O}(n \cdot \kappa^4)$. Their protocol requires an initial setup phase, which they prove is necessary for subquadratic communication complexity.

All these protocols enjoy constant expected latency. 

Table \ref{tbl:related work} summarizes the existing results and the contribution of the present work.

\begin{table}[ht]
\resizebox{\textwidth}{!}{%
\begin{tabular}{|l|l|l|l|l|l|l|l|}
\hline
\textbf{Paper} & \textbf{Problem}  & \textbf{Communication} & \textbf{Rounds} & \textbf{Resilience} & \textbf{Det.} & \textbf{PKI} & \textbf{Properties} \\
\hline
\multicolumn{8}{|c|}{\textbf{Synchrony}} \\ 
\hline





Civit et al. \cite{PODC:CDGGKV24} & MVBA, IC & $\mathcal{O}(L_o n + n\cdot f)$ & $\mathcal{O}(n)$ & $t < n/2$ & Yes & Yes & A;T;V;IC \\

Lenzen et al. \cite{lenzen2022recursive} & BA & $\mathcal{O}(nt)$ & $\mathcal{O}(f)$ & $t < n/3$ & Yes & No & A;T;V\\

\textbf{This paper} & BA & $\mathcal{O}(nf)$ & $\mathcal{O}(f)$ & $t < n/2$ & Yes & Yes & A;T;V \\

\textbf{This paper} & BA & $\mathcal{O}((n\cdot \log t + tf)\cdot \log n)$ & $\mathcal{O}(f)$ & $t < n/2$ & Yes & Yes & A;T;V \\

Spiegelman \cite{spiegelman2020search} & BA & $\Omega(n + tf)$ & Any & Any & Yes & Yes & A;T;V \\


Abraham et al.\cite{PODC:ACDNPR19} & BA & $\mathbb{E}(\Omega(f^2))$ & Any & Any & No & Yes & A;T;V\\

\hline
\multicolumn{8}{|c|}{\textbf{Partial Synchrony}} \\ 
\hline


Yin et al. \cite{PODC:YMRGA19, lewis2023fever}$^\ast$ & SMR  & $\mathcal{O}(n \cdot f)$ & $\mathcal{O}(f)$ & $t < n/3$ & Yes & Yes & A;T;EV \\

Civit et al. \cite{civit2025partial} & BA &  $\mathcal{O}(n^2)$ & $\mathcal{O}(n)$ & $t < n/3$ & Yes & No & A;T;V \\

\textbf{This paper} & BA & $\mathcal{O}(nf)$ & $\mathcal{O}(f)$ & $t < n/3$ & Yes & Yes & A;T;V \\

\textbf{This paper} & BA & $\mathcal{O}((n + t \cdot f)\cdot \log n \cdot \log t)$ & $\mathcal{O}(f\cdot \log n)$ & $t < n/3$ & Yes & Yes & A;T;V\\

Spiegleman \cite{spiegelman2020search} & BA & $\Omega(\infty)^\S$ & Any & Any & Yes & No & A;T;V \\

\hline
\multicolumn{8}{|c|}{\textbf{Asynchrony}} \\ 
\hline

Cachin et al. \cite{JC:CacKurSho05} & BA & $\mathbb{E}[\mathcal{O}(n^2)]$ & $\mathbb{E}[O(1)]$ &  $t < n/3$ & No & Yes & A;T$^\ddagger$;V\\

Cohen et al. \cite{cohen2020not} & BA & $\mathcal{O}(n \cdot \log^2 n)^\dagger$ & $\mathbb{E}[O(1)]$ & $t < (\frac{1}{3} - \varepsilon)n$ & No & Yes & A$^\dagger$;T$^\dagger$;V$^\dagger$ \\

Blum et al. \cite{TCC:BKLL20} & BA & $\mathbb{E}[\mathcal{O}(n \cdot \kappa^4)]$ & $\mathbb{E}[O(1)]$ & $t < (\frac{1}{3} - \varepsilon)n$ & No & Yes & A$^{\dagger\dagger}$;T$^{\dagger\dagger}$;V$^{\dagger\dagger}$ \\

\textbf{This paper} & BA & $\mathbb{E}[\mathcal{O}((n + t^2)\cdot \log n)]$ & $\mathbb{E}[O(1)]$ & $t < n/3$ & No & Yes & A;T$^\ddagger$;V\\ 

\textbf{This paper} & BA & $\mathbb{E}[\Omega(n + t^2)]$ & Any & Any & No & Yes & A;T$^\ddagger$;V\\

\hline
\end{tabular}
}
\vspace{0.5cm}
\caption{Comparison of Byzantine Agreement Protocols. When a communication is stated with $\Omega$, it indicates a lower bound result. \\
$\ast$ - \cite{lewis2023fever} uses random leader election, but can be derandomized, maintaining its characteristics.\\
A - Agreement, T - Termination, V - Strong Validity, IC - Interactive Consistency, EV - External Validity.\\
$\S$ - unlimited messages before GST, $\dagger$ - with high probability, $\dagger\dagger$ - with overwhelming probability, $\ddagger$ - with probability 1, $\mathbb{E}$ - in expectation.
}
\label{tbl:related work}
\end{table}
\paragraph*{On the use of communication graphs.}
Even in an all-to-all network, it can be useful to restrict communication to a \emph{communication graph} to bound total communication. While promising, this approach remains relatively underexplored for distributed agreement tasks.

Among the works we are aware of, Elsheimy et al.~\cite{elsheimy2024deterministic} use an expander-based communication pattern to obtain adaptive $\mathcal{O}(nf)$ communication for Byzantine Agreement; however, their method inherently has worst-case round complexity $\Omega(n)$. Communication graphs have also been used for leader election, e.g., via bipartite samplers and expanders in~\cite{king2006scalable, bhangale2025leader}. These results, however, guarantee agreement only among most of the honest parties rather than all of them. Notably, the work by Chlebus et al. \cite{chlebus2023deterministic} utilizes expander graphs to disseminate a value known to a subgroup of nodes to the rest of the system. However, their approach is only applicable in synchronous networks where $t =\mathcal{O}(\sqrt{n})$.

\section{Results Overview}

In this section, we formally state our results and outline the main ideas behind them.

Our results are organized along two orthogonal dimensions. First, we consider three standard network models: the synchronous model, the partially synchronous model, and the asynchronous model. Second, we distinguish between two fault regimes: the classical setting where $t = \Theta(n)$, and the large-scale setting where $t \ll n$.

These regimes capture two different bottlenecks. When $t = \Theta(n)$, the central goal is to obtain \emph{adaptive} Byzantine Agreement while still achieving optimal resilience. When $t \ll n$, there is an additional challenge: we aim to \emph{decouple} the dependence on $n$ and $t$, so that systems can scale to many participants without paying a quadratic communication cost in $n$.

\paragraph*{Optimal Resilience Adaptive Algorithms}
For synchronous and partially synchronous settings, we present Byzantine Agreement algorithms with optimal resilience and optimal adaptive communication and round complexities.
\begin{restatable}{theorem}{syncNf}
    \label{thm:sync nf}
    There exists a deterministic algorithm that tolerates up to $t < n/2$ byzantine faults and, given a PKI, solves Byzantine Agreement in synchrony with communication complexity $\mathcal{O}(n \cdot f)$ and round complexity $\mathcal{O}(f)$.
\end{restatable}

\begin{restatable}{theorem}{thmGSTnf}
    \label{theo:gst-main-ba}
    There exists a deterministic algorithm that tolerates up to $t < n/3$ byzantine faults and, given a PKI, solves Byzantine Agreement in partial synchrony with communication complexity $\mathcal{O}(n \cdot f)$ and round complexity $\mathcal{O}(f)$.
\end{restatable}

Interestingly, we show that in asynchrony, adaptiveness is impossible.
\begin{restatable}{theorem}{thmLB}
    \label{thm:lb}
    Let $\mathcal{A}$ be a protocol solving asynchronous byzantine agreement resilient to $t$ byzantine parties with almost sure termination. $\mathcal{A}$ is allowed to have access to $PKI$, to shared randomness, and run any kind of setup, regardless of its message complexity, before parties acquire their input value. Let $M$ be the number of messages exchanged after setup to reach agreement, then there exists an input configuration and a message scheduling protocol such that $\mathbb{E}[M] = \Omega(t^2)$ without any byzantine party. This lower bound holds even if the adversary is static, i.e, it can only corrupt parties at the beginning of the protocol.
\end{restatable}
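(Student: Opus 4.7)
The plan is to adapt the classical Dolev--Reischuk message-counting argument to the asynchronous setting, using the adversary's control over the scheduler to simulate silenced Byzantines by merely delaying honest messages. The key observation is that in asynchrony, delaying every message incident to a set $Q$ of up to $t$ parties past the decision time of the remaining $R := \mathcal{P} \setminus Q$ parties is indistinguishable, from $R$'s perspective, from those $Q$ parties having crashed. Since $|R| \geq 2t+1$ meets the protocol's resilience threshold, almost-sure termination forces $R$ to decide under such a schedule, and Strong Unanimity pins that decision on unanimous inputs.

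Concretely, I assume toward contradiction that a protocol $\mathcal{A}$ solves asynchronous BA with resilience $t$ and $\mathbb{E}(M) = o(t^2)$ under every input and every schedule. I then fix a schedule $\sigma^\star$ that isolates such a cut $Q \mid R$ and compare three executions under $\sigma^\star$: $A^\star$ on the all-$0$ input, $C^\star$ on the all-$1$ input, and the hybrid $B^\star$ in which $Q$ proposes $0$ while $R$ proposes $1$. Because $\sigma^\star$ suppresses every $Q\leftrightarrow R$ message before $R$ decides, the view of $R$ in $B^\star$ coincides bit-for-bit with its view in $C^\star$ (so $R$ decides $1$ in $B^\star$), while the view of $Q$ in $B^\star$ coincides bit-for-bit with its view in $A^\star$ (so $Q$ decides $0$ in $B^\star$, provided it decides at all before the delayed delivery). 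The two conclusions together violate Agreement in $B^\star$.

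The main obstacle is ensuring that the parties in $Q$ actually do decide in $A^\star$ before the delayed cross-cut messages arrive, because a protocol may insist on a $(2t+1)$-sized supporting quorum and, with $|Q| \leq t$, $Q$ could otherwise stall until the cross-cut delivery, breaking the indistinguishability. This is precisely where the $o(t^2)$ message budget is invoked. Averaging the expected number of messages in $A^\star$ over the $n$ possible receivers produces, with positive probability, a party $p$ whose total number of distinct senders is strictly less than $t$; choosing $Q := \{p\} \cup (\text{senders to } p)$ then makes every message $p$ receives in $A^\star$ originate inside $Q$, so $p$ completes its decision before cross-cut delivery and the indistinguishability goes through.

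A subsidiary technicality is the mild circularity between the definition of $Q$ (obtained from averaging) and that of $\sigma^\star$ (which depends on $Q$). I plan to resolve this in one pass using that the $o(t^2)$ bound holds for \emph{every} schedule, so the averaging step is insensitive to which initial candidate $\sigma^\star$ one starts with: pick any $\sigma^\star$ that greedily withholds cross-cut traffic, perform the averaging under that schedule, redefine $Q$, and observe the constructed $B^\star$ still satisfies the view-coincidence argument above. All probabilistic steps are carried out in expectation, which additionally requires conditioning on a high-probability event bounding $M$ by $O(\mathbb{E}(M))$ before invoking Markov and pigeonhole.
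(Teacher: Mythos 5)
Your overall strategy---isolating a cut $Q\mid R$, running an all-$0$ execution $A^\star$, an all-$1$ execution $C^\star$, and a hybrid $B^\star$ under a cross-cut-withholding schedule $\sigma^\star$, and breaking Agreement via view coincidence---is genuinely different from the paper's argument, but it has a gap at precisely the step you flag as the ``main obstacle,'' and the proposed remedy does not close it. You correctly note that $|Q| \le t$ means $Q$ cannot collect the $n-t > |Q|$ distinct senders a protocol typically demands, so $Q$ may simply stall until the withheld traffic is delivered. Your fix is to average over receivers to find a $p$ with fewer than $t$ senders and then take $Q := \{p\}\cup(\text{senders to }p)$, concluding that ``$p$ completes its decision before cross-cut delivery.'' This inference is not justified: under $\sigma^\star$ the fact that $p$ hears from few senders is a \emph{consequence} of withholding everything outside $Q$, not evidence that $p$ has enough information to decide. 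The $o(t^2)$ bound constrains the number of messages \emph{sent}; it does not force any party to decide while having seen only messages originating in a $\le t$-sized set, and the whole point of a $t$-resilient asynchronous protocol is that a single party cannot safely decide on the testimony of at most $t$ others. The circularity between a $Q$ derived from averaging under $\sigma^\star$ and a $\sigma^\star$ that withholds $Q\leftrightarrow R$ traffic is acknowledged but not resolved: re-defining $Q$ changes the cross-cut, so the schedule you averaged against is no longer the one you ultimately use.

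The paper avoids exactly this trap. In \texttt{Main} it never fully isolates anyone: intra-$C$ traffic is delayed, and only the \emph{first} $t/2$ messages from $B$ to each $p\in C$ are delayed, so every honest party is still promised eventual delivery. Decision under \texttt{Main} is then established not by counting senders but by an indistinguishability ladder through auxiliary setups $S_B$ and $S_Q$ in which the withheld behaviour is reproduced by strictly fewer than $t$ Byzantine parties, so Probabilistic Termination applies. With decision guaranteed for all parties in \texttt{Main}, Markov and a union bound are applied to argue that with probability at least $1/t$ some $p\in C$ decides having received at most $t/2$ (all delayed) messages---i.e., having received nothing at all---and the contradiction comes from letting that $p$ start with the opposite input in a fresh setup in which it remains message-starved for a bounded time $T$. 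Your proposal skips the decision-guarantee step and tries to extract it from the message budget alone, which is the precise thing the budget does not give you. To repair it you would need either the paper's partial-delay schedule combined with indistinguishability from Byzantine emulations, or some comparable mechanism that first forces $p$ to decide in the delayed run and only afterwards isolates the witnesses.
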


A notable contribution of this result --- beyond the fact that it holds under a very weak adversary and a strong algorithmic setting --- is that, to the best of our knowledge, it establishes the first lower bound of $\Omega(t^2)$, in contrast to the $\Omega(f^2)$ bounds shown in~\cite{PODC:DolRei82, PODC:ACDNPR19}.

\paragraph*{Large-Scale $t \ll n$ Algorithms.}
In synchrony, partial synchrony, and asynchrony, we provide algorithms whose communication and round complexities have near-optimal dependence on $n$, $t$, and $f$. 
\begin{restatable}{theorem}{mainThmSync}
    \label{thm:sync main}
    There exists a deterministic algorithm that tolerates up to $t < n/2$ Byzantine faults and, given a PKI, solves Byzantine Agreement in synchrony with communication complexity of $\mathcal{O}((n\cdot \log t + t \cdot f)\cdot \log n)$ and round complexity of $O(f)$.
\end{restatable}

\begin{restatable}{theorem}{mainThmGST}
    \label{thm:gst main}
    There exists a deterministic algorithm that tolerates up to $t < n/3$ Byzantine faults and, given a PKI, solves Byzantine Agreement in partial synchrony with communication complexity of $\mathcal{O}((n + t \cdot f)\cdot \log t \cdot \log n)$ and round complexity $O(f\log n)$.
\end{restatable}

\begin{restatable}{theorem}{mainThmAsync}
    \label{thm:async main}
    There exists a randomized algorithm that tolerates up to $t < n/3$ Byzantine faults and, given a PKI, solves Byzantine Agreement in asynchrony with expected communication complexity of $O((n + t^2)\cdot\log n)$ and expected round complexity $O(1)$.
\end{restatable}

Note that although non-adaptive, the communication complexity of $O((n + t^2)\cdot \log n)$ is optimal up to a logarithmic factor due to the infeasibility in Theorem \ref{thm:lb}.

\paragraph*{Techniques}

We build on two simple ideas: (I) broadcasting a value from a small group of nodes, and (II) assigning nodes to committees. While both ideas have been used extensively in the theoretical and practical literature, we substantially strengthen what is known: we provide a \emph{deterministic} scheme that is therefore secure even against an \emph{adaptive adversary}. To the best of our knowledge, prior approaches either rely on randomness and/or assume weaker adversaries that cannot corrupt parties on the fly.

In particular, we leverage \emph{dispersers}, defined in Section \ref{sec:disperser}, to obtain robust committee assignment (note that parties can be in multiple committees):

\begin{restatable}{definition}{defCommittee}
    We call a committee \emph{compromised} if it contains at least one byzantine party. 
    
    For a given assignment of parties to committees and a given set of byzantine parties, we say that a party is \emph{blocked} if all committees it is assigned to are compromised. 
\end{restatable}

\begin{restatable}{theorem}{thmNodesToCommittees}
    \label{thm:nodes2committees}
    Given $n$ parties and an integer $\hat{f}$, there is a way to assign parties to committees such that:
    \begin{enumerate}
        \item There are at most $O(\hat{f} \log n)$ committees.
        \item Every party is assigned to at most $O(\log n)$ committees.
        \item An adversary, after seeing the assignment, can make up to $\hat{f}$ arbitrary parties byzantine. Nonetheless, no matter the adversarial choice of which parties to corrupt, the number of blocked parties is at most $c_b\cdot \hat{f}$ for some constant $c_b$.
    \end{enumerate}
\end{restatable}

We would like to remark that \Cref{thm:nodes2committees} leverages a disperser graph whose existence was only proven non-constructively with probabilistic methods. Using explicit dispersers that can be constructed deterministically in polynomial time results in a bigger polylog factor in the communication complexity of our algorithms.

\section{Preliminaries}
We consider a set of $n$ parties $\mathcal{P} = \{ p_1, \dots, p_n \}$ running a distributed protocol in a fully connected network. Channels are authenticated, meaning that when receiving a message, a party knows who sent it. 

\noindent\textbf{Network Setting}: We consider $3$ different network settings:
\begin{itemize}
    \item \textit{Synchronous network}: We assume there is a known $\Delta > 0$ such that when sending a message, a party has the guarantee that this message will be received within $\Delta$ units of time. This allows protocols to be expressed in rounds where each round takes $\Delta$ units of time.
    \item \textit{Asynchronous network}: There is no guarantee on the delay between a party sending a message and it being received. The only guarantee is that a message sent is eventually received.
    \item \textit{Partially synchronous network}: We assume the setting of \cite{PODC:DwoLynSto84}: There is a known $\Delta > 0$ and an unknown \emph{Global Stabilization Time} (GST), such that a messages sent at time $t$ are received by time at most $\max\{t, GST\} + \Delta$.
\end{itemize}

\noindent\textbf{Byzantine Behavior}: We assume that a set of at most $t$ parties can exhibit arbitrary malicious behavior when running the protocol. In particular, an adversary is able to control these parties, and the protocol must be resilient to this setting. Parties that are not corrupted are called honest. In the rest of this paper, we will also consider $f$, the number of actual byzantine parties during an execution of a protocol: $0 \leq f \leq t \leq n$.

Moreover, we assume that for the asynchronous setting as well as the partially synchronous setting before GST, the adversary has full control over the scheduler and can delay or reorder any messages as long as they are eventually received.

Our protocols are resilient against a dynamic adversary. This means that the adversary is allowed to observe messages being sent before deciding which parties to corrupt.

\noindent\textbf{Clocks and Views}: For our partially synchronous protocols, we adopt a ubiquitous concept of a \emph{view}. A view refers to a logical phase of the protocol during which a specific process is designated as the leader responsible for driving progress. Each view is associated with a unique view number, and all correct nodes share a common function $\mathrm{leader}(i)$ that defines a leader of the $i$-th view. If the leader fails to make progress — due to crashes, equivocation, or network delays — the protocol initiates a view change, advancing to the next view with a new leader. With this approach, different nodes might be in different views, ignoring leaders other than those of their view, hence preventing them from obtaining enough votes/signatures/etc. to drive the progress. 

The view abstraction must satisfy the following property: \textit{eventually after GST, all honest nodes stay in the view with an honest leader for long enough}. Here, the ``long enough'' part is protocol-specific, and in our partially synchronous protocols, this always refers to a time period of $\Theta(\Delta)$.

Note that implementing a view abstraction is a straightforward process if nodes have perfectly synchronized clocks. In this case, one can define an $i$-th view to be a time period of, for instance, $[4i\cdot\Delta, 4(i + 1)\cdot\Delta]$, hence making each view span a time period of $4\Delta$. An alternative to a perfect clocks assumption is a \emph{Partial Initial Clock Synchronization} assumption \cite{lewis2023fever}, which states that there is a known time period $\Gamma$ such that at least $t$ honest processes start a protocol within time $\Gamma$ after the first honest process starts it. Under this assumption, Fever \cite{lewis2023fever} implements a view change in a way that at most $\mathcal{O}(n + n \cdot f)$ words are sent before all honest nodes stay in the same view with an honest leader for long enough. We remark that Fever can be made to send only $\mathcal{O}(n + t \cdot f)$ words by making only $2t + 1$ instead of $n$ processes sending a $view$ message to the leader. Another common clock assumption is to say that local clocks of the processes can drift arbitrarily before GST, but do not drift after \cite{civit2025partial}.

In our partially synchronous algorithms, we abstract out both the clocks and the internals of a view change.

\noindent\textbf{Byzantine Agreement}: In this paper, we focus on solving the problem of Strong Binary Byzantine Agreement, which we call just Byzantine Agreement for simplicity. In this problem, each honest party can \emph{propose} a value (either $0$ or $1$) and can \emph{decide} a value. The Byzantine Agreement is defined by the following properties.
\begin{itemize}
    \item \textit{Termination}: Every honest party eventually decides a value.
    \item \textit{Agreement}: If two honest parties $p_1$ and $p_2$ respectively decide values $v_1$ and $v_2$, then $v_1 = v_2$.
    \item \textit{Strong Unanimity}:\footnote{In the literature, this property is also called Strong Validity.} If all honest parties propose the same value, then this value must be decided.
\end{itemize}
For the synchronous and partially synchronous setting, we say that a protocol satisfies Byzantine Agreement (BA) if it satisfies Termination, Agreement, and Strong Unanimity. Because of the FLP impossibility result \cite{fischer1985impossibility} stating that Byzantine Agreement with aforementioned properties can not be solved in asynchrony, for the asynchronous setting, we replace the Termination with a \emph{Probabilistic Termination} property:
\begin{itemize}
    \item \textit{Probabilistic termination}: Every honest party eventually decides a value almost surely.
\end{itemize}

\noindent\textbf{Cryptographic Primitive}: In some of our protocols, we will assume the presence of a public-key infrastructure (PKI). For a party $p$, the interface of PKI consists of functions $\textit{sign}_p$ and $\textit{verify}$ such that:\footnote{As is standard in cryptography, here and in similar contexts, we assume only party $p$ can invoke $\textit{sign}_p(\cdot)$, but all parties can invoke $\textit{verify}(\cdot, \cdot, p)$.}
\begin{itemize}
    \item For a given party $p$ and message $m$, $\textit{sign}_p(m)$ returns a signature $s$.
    \item For a given party $p$, value $s$ and message $m$, $\textit{verify}(m,s,p)$ returns $true$ if and only if $s$ was obtained from calling $\textit{sign}_p(m)$.
\end{itemize}
Moreover, for some protocols, we will also assume the presence of a threshold signature scheme \cite{JC:BonLynSha04}. A threshold signature scheme is parameterized by two quantities: $k$ - the threshold, and $n$ - the number of parties participating in the scheme. Each party $p$ that participates in the scheme is given functions $\textit{tsign}_p$, $\textit{tcombine}$, and $\textit{tverify}$ such that:
\begin{itemize}
    \item For a given party $p$ and message $m$, $\textit{tsign}_p(m)$ returns a partial signature $\rho_p$.
    
    \item For given message $m$, a set $P \subseteq \mathcal{P}$ such that $|P| = k$ and tuple $(\rho_p)_{p \in P}$ such that $\forall p \in P,\ \rho_p = \textit{tsign}_p(m)$, a call $\textit{tcombine}\left(m, (\rho_p)_{p \in P}\right)$ returns a threshold signature $\sigma$.
    
    \item For a given message $m$ and threshold signature $\sigma$, $\textit{tverify}(m, \sigma)$ returns $true$ if and only if there exists a tuple of partial signatures $(\rho_p)_{p \in P}$ with $|P| = k$ such that $\sigma = \textit{tcombine}\left(m, (\rho_p)_{p \in P}\right)$.
\end{itemize}
Except when mentioned otherwise, the threshold signature will be used with parameter $k = n - t$ in this paper.

We will also use \emph{aggregate signature scheme} \cite{EC:BGLS03}, which, for simplicity of presentation, we define to be a threshold signature scheme\footnote{The original definition is more general; it allows parties to sign different messages. We do not need this versatility in our results.} with $k = n.$ 

\noindent\textbf{Communication Complexity.} We define the \emph{communication complexity} of an algorithm as the number of \emph{words} sent by honest nodes, where each word contains a constant number of signatures and a logarithmic number of bits. In our protocols, all messages contain a constant number of words, so we use word complexity and message complexity interchangeably. 

As pointed out by Spiegelman et al. \cite{spiegelman2020search}, there does not exist a partially synchronous algorithm that solves BA sending a bounded number of messages when counting messages sent before GST. Therefore, when speaking about the communication complexity of partially synchronous algorithms, we refer only to messages sent \emph{after} GST. 

\noindent\textbf{Round Complexity.} We define the round complexity of a synchronous Byzantine Agreement algorithm to be the first round by which all honest nodes have decided a value. The round complexity of asynchronous algorithms is defined assuming they run in synchrony from the start. Finally, for partially synchronous algorithms, the round complexity is the first round by which all honest nodes have decided a value minus the round when GST is reached.

\section{High-Level Structure of the Algorithms}
All our algorithms that decouple $n$ and $t$ in the communication complexity have the same high-level structure, which we now present in this section. It goes as follows. First, we pick a \emph{Quorum} of $O(t)$ nodes and run agreement among those using some efficient agreement algorithm. We call this phase of the protocol \emph{quorum agreement}. The second phase is disseminating the decided value from quorum nodes to the rest of the system. This phase we call \emph{Quorum to All Broadcast} or QAB for short.

\subsection{Quorum Agreement}
\label{sec:qba}
The first step in our protocol is to reach agreement among a quorum. The specific agreement mechanism depends on the network model. In the asynchronous setting, we use the protocol of Cachin et al. \cite{JC:CacKurSho05}. For synchronous and partially synchronous networks, however, prior work did not provide algorithms that simultaneously achieve our targets for adaptive communication and round complexity. We therefore present new algorithms meeting these requirements, which are fully described in \cref{sec:leader-based-agreement}.

We use a view-based approach based on Spiegelman's byzantine agreement protocol for external validity \cite{spiegelman2020search}, but with many abstractions and modifications to make it fit for strong agreement in both the synchronous and partially synchronous setting with optimal efficiency:
\begin{itemize}
    \item \textbf{Proofs and threshold signature}: The original protocol relies on creating multiple proofs based on $(n-t)$-threshold signatures, each stronger than the precedent, until getting a commit proof, which satisfies agreement. We change the threshold from $n-t$ to $\lceil (n + t + 1)/2\rceil$, similar to what is used in \cite{cohen2023make}. This has no impact in the partially synchronous setting but guarantees that we get the same intersection properties for the synchronous setting. A consequence of this is that if $f \geq \lfloor (n - t - 1)/2\rfloor$, which can happen in the synchronous setting, then it is not guaranteed that a value will be decided, even if the leader is honest.
    \item \textbf{Abstracting the valid input}: In external validity, we assume each party to hold an input that (1) is valid to agree on and (2) each party can locally check that this value can be agreed on. We abstract this further to work with strong unanimity by assuming that parties no longer hold such input at the beginning of the protocol, but instead have at their disposal a \textbf{retrieval} sub-protocol which can retrieve such input at the cost of some additional communication complexity and rounds. We also assume that this sub-protocol can fail and return $\bot$. This way, the only difference between the synchronous and partially synchronous settings on this part of the algorithm is the retrieval protocol implementation.
    \item \textbf{Always making progress in synchrony}: In the synchronous settings, because of the high resiliency requirement $t < n/2$, it might happen that no proof can be generated if the honest parties' input is evenly distributed between $0$ and $1$. To handle this setting, we make small progress every time the leader is honest with a simple observation: If the leader cannot make a proof, then the leader knows that both $0$ and $1$ can be decided.
    \item \textbf{Complaining round}: We note that the original paper's round complexity is $\mathcal{O}(n)$. To reduce the round complexity to $\mathcal{O}(f)$, we add a complaining round inside each view. This idea is similar to the catch-up mechanism from the DARE to agree protocol \cite{PODC:CDGGKV24}. In essence, as long as a party has not decided a value, it will complain to the leader at every view. This ensures that every honest party will have decided after a view where the leader is honest while only having a $\mathcal{O}(n\cdot f)$ overhead.
    \item \textbf{Synchronous fallback}: As stated above, if $f < \lfloor (n - t - 1)/2\rfloor$ in the synchronous setting, our previous approach is guaranteed to terminate. Otherwise, we remark that $f = \Omega(n)$ and we use the same technique as \cite{cohen2023make}, which is to detect this situation and run a fallback agreement protocol with quadratic message complexity. We make improvements to this approach to ensure no honest party will send any messages after $\mathcal{O}(n)$ rounds and by simplifying the fallback call.
\end{itemize}

\subsection{Quorum to All Broadcast}
\label{sec:qab}
The second step is to efficiently broadcast the value agreed upon from quorum nodes to the rest of the system: to perform a QAB. Protocols and proofs are provided in \cref{sec:quorum-to-all-sec}.

The formal definition of the primitive goes as follows. A QAB instance is defined with respect to a group of dedicated nodes called \emph{quorum}. Every honest node $u$ in the quorum starts an instance with a value $\vin$ (common among all quorum nodes) and a proof $proof_u$. We assume that all the honest nodes in the system possess a predicate $\mathrm{verify}(\cdot, \cdot)$ such that $\mathrm{verify}(v, proof)$ is true only if $v = \vin$ and for any honest quorum node $u$, $\mathrm{verify}(\vin, proof_u)$ is true. QAB exposes the following interface:

\begin{itemize}
    \item $\textsc{QABInitiate}()$ - start the protocol for any party.
    \item $\textsc{StartQuorum}(\vin, proof)$ - provide a value and proof to a quorum node. 
    \item $\mathrm{Decide}(v)$ - decide value $v$.
\end{itemize}

QAB is defined by the following property that encapsulates classical Agreement, Termination, and Validity. 
\begin{property}[Complete Correctness]
    Every honest node eventually decides $v_{in}$.
\end{property}

In our applications, it is crucial that quorum nodes might not possess $\vin$ at the beginning of the execution, but acquire it at some arbitrary point in time. Hence, our QABs support quorum nodes invoking $\textsc{StartQuorum}(\vin, proof)$ at different times. This motivates us to define the \emph{round complexity of QAB} as the number of rounds between the last honest quorum node invokes $\textsc{StartQuorum}(\vin, proof)$ and all honest nodes decide the value.

As for non-quorum nodes, there is no specific reason for them to invoke $\textsc{InitiateQAB}()$ later; hence, they invoke it at the very beginning. This brings another challenge: we need to make sure that non-quorum nodes do not exchange a lot of messages before the problem can even be solved, that is, before any quorum node invokes $\textsc{StartQuorum}(\vin, proof)$. 

\paragraph*{QAB Phase} The main building block common among our QABs in different network models is called a \emph{QAB Phase}. It is a procedure parameterized by a value $\hat{f}$. On a high level, a QAB phase with parameter $\hat{f}$ is guaranteed to succeed in disseminating a quorum value to all parties if the actual number of faults $f$ is no larger than $\hat{f}$. We now give a description of a QAB phase. 

We restrict the communication between nodes to a specific graph, which we call a \emph{communication graph}. If there is an edge between two nodes in this graph, we say that these two nodes are \emph{linked}. 

We construct the communication graph $G$ as follows. First, we assign parties to committees according to Theorem \ref{thm:nodes2committees} with parameters $n$ and $\hat{f}$. This way, we obtain a set $\mathcal{C}$ of $O(\hat{f}\log n)$ committees with each party being assigned to $O(\log n)$ committees. For each committee $C \in \mathcal{C}$, we choose (arbitrarily) a node in $C$ ``responsible'' for this committee and call such a node a \emph{relayer} for $C$, denoted $relayer(C)$. Hence, we have $O(\hat{f}\log n)$ relayers. In the communication graph, we then link every party to the relayer of each committee this party is assigned to. That is $\{(u, relayer(C))\mid u \in C, C \in \mathcal{C}\}  \subseteq E(G)$. We also link every relayer to every quorum node.

\begin{figure}
    \centering
    \includegraphics[width=0.5\linewidth]{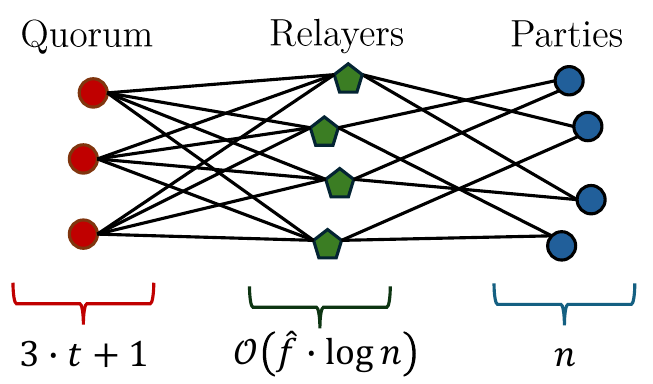}
    \caption{The communication graph for communications with relayers. Nodes are assigned three roles (non-exclusive): \textit{parties}, \textit{relayers} and \textit{quorum} with respective sizes of $n$, $O(\hat{f}\log n)$ and $3t + 1$. Each party is only linked to $O(\log n)$ relayers, and each relayer is linked to each quorum node.}
    \label{fig:communication graph}
\end{figure}

Given the communication graph, the high-level description of our QAB phase with parameter $\hat{f}$ is as follows. The pseudocode for the procedure can be found in \cref{sec:quorum-to-all-sec}.

\begin{enumerate}  
    \item \textbf{Quorum-Relayers-Parties.} Each quorum node, upon invoking $StartQuorum(\vin,proof)$, broadcasts $(\vin, proof)$ to every relayer. Upon receiving and verifying $(\vin, proof)$, each relayer then broadcasts it among the parties linked to them. 
    \item \textbf{Parties-Relayers-Quorum.} Every committee has an aggregate signature setup. After receiving $(\vin, proof)$ from a relayer and verifying it, each party produces a partial signature for a message saying that it knows $\vin$ and sends it back to the relayer. After receiving a valid partial signature from every party linked to them for a given committee, a relayer aggregates the signatures into a single one and sends it to all the quorum nodes from whom it received $\vin$.
    \item \textbf{Quorum-Undecided.} Each quorum node $u$ maintains a set $Acknowledged_u$, initially empty, of nodes of whom $u$ is aware that they know $\vin$. Once $u$ receives and verifies an aggregated signature for a committee $C$ from a $relayer(C)$, $u$ knows that all the parties in $C$ know $\vin$ and hence it updates $Acknowledged_u = Acknowledged_u \cup C$. Once $|Acknowledged_u| \geq |n - c_b\cdot \hat{f}|$ where $c_b$ is a constant from Theorem \ref{thm:nodes2committees}, $u$ sends $\vin$ directly to the nodes in $\mathcal{P} \setminus Acknowledged_u$ and terminates.
\end{enumerate}

As a remark, note that in the last step, when $u$ is sending $\vin$ to the parties in $\mathcal{P} \setminus Acknowledged_u$, this happens \emph{not} according to the communication graph. For $\mathcal{P} \setminus Acknowledged_u$ is determined dynamically. This is the only such place; all other messages are sent according to the graph. 

\subsection{Putting it all together}

The final step is to show that using a regular byzantine agreement along with a QAB protocol, one can obtain an improved protocol for byzantine agreement which decouples $n$ and $t$:
\begin{theorem}\label{theo:decoupling}
    Given a network setting, we assume there is a $BA$ protocol for binary agreement with message complexity $\textsc{Messages}_{BA}(n, f)$ supporting with resiliency $t < n/3$ and round complexity $\textsc{Rounds}_{BA}(f)$ which outputs a proof that a value was decided along with the value. We assume as well there exists $QAB$ protocol with message complexity $\textsc{Messages}_{QAB}(n,t,f,l)$, where $l$ is an upper bound on the number of rounds before every honest quorum node calls $\textsc{StartQuorum}(\vin, proof)$, and round complexity $\textsc{Rounds}_{QAB}(n, f)$. 

    Then there exists a $BA$ protocol with resiliency $t < n/3$, message complexity $\textsc{Messages}_{BA}(3t+1,f) + \textsc{Messages}_{QAB}(n,t,f,\textsc{Rounds}_{BA}(f))$ and round complexity $\textsc{Rounds}_{BA}(f) + \textsc{Rounds}_{QAB}(f)$.
\end{theorem}

\begin{proof}

Our protocol \textsc{DecoupledBA}, whose pseudocode is available below relies on running the BA protocol $\Pi_{BA,Q}$ on a set $Q$ of $3t+1$ parties which make up the quorum nodes, then to use the QAB protocol to spread the output of this BA along with its proof to every party.

\begin{dianabox}{\textsc{DecoupledBA}}
\algoHead{$BA$ protocol with decoupled $n$ and $t$ for a party $p$ with input $\vin$}
\begin{algorithmic}[1]
    \InParallel{}
        \State Call $\textsc{QABInitiate}()$
        \State When QAB decides a value $v$, decide $v$
    \EndParallel
    \State Let $Q$ be the set of the $3t+1$ parties with lowest IDs
    \If{$p \in Q$}
        \State $(value, proof) \gets \Pi_{BA,Q}(\vin)$
        \State Call $\textsc{StartQuorum}(value, proof)$
    \EndIf
\end{algorithmic}
\end{dianabox}

The round complexity of \textsc{DecoupledBA} is $\textsc{Rounds}_{BA}(f) + \textsc{Rounds}_{QAB}(f)$, moreover, all honest quorum nodes will call $\textsc{StartQuorum}(\vin, proof)$ at most $\textsc{Rounds}_{BA}(f)$ rounds after the start of the protocol. Therefore, the message complexity is $\textsc{Messages}_{BA}(3t+1,f) + \textsc{Messages}_{QAB}(n,t,f,\textsc{Rounds}_{BA}(f))$. Because of the BA termination and agreement, all honest quorum nodes will call $\textsc{StartQuorum}(\vin, proof)$ with the same $\vin$ and a proof for it, where $\vin$ satisfies strong unanimity because of the BA validity. Because of QAB complete correctness, all parties will therefore all eventually decide the same value $\vin$. So termination, agreement and validity are satisfied.

\end{proof}

\section{Disperser}
\label{sec:disperser}
The core building block in all our algorithms is a \emph{disperser} graph. 
\begin{definition}
    Consider a bipartite graph $G = (L \sqcup R, E)$ where each node on the left has degree $d$. $G$ is a $(k, \varepsilon)$ \emph{disperser} if for every $S \subseteq L$ of size at least $k$, it holds that $|N(S)| \geq (1 - \varepsilon)|R|$.

    Here $N(S)$ is a neighborhood of $S$, namely $N(S) = \{v \in R\mid (u, v) \in E\text{ for some } u\in S\}$.
\end{definition}

Informally, a bipartite graph is a disperser if any large enough subset on the left almost covers the whole right part. Yet, this property alone can be trivially achieved by a complete bipartite graph. Also, it is not interesting if the subset on the left covers the right part just because it is huge itself. Therefore, what's challenging is to achieve the defining disperser property with few edges (ideally with $d = O(\log n)$) and for subsets small enough compared to the right side.

Fortunately, graphs satisfying these conditions exist. The following is a known fact (see, e.g., Theorem 1.10 in \cite{radhakrishnan2000bounds}).

\begin{theorem}
    \label{thm:bipartite disperser}
    For any $n, m$, $m \leq n$, there exists a $(k, 1/2)$ bipartite disperser $G(L\sqcup R, E)$ with $|L| = n, |R| = m$, left degree $d = O(\log n)$ and $k = O(\frac{m}{\log n})$. 
\end{theorem}

However, the existence of bipartite dispersers in Theorem \ref{thm:bipartite disperser} has only been demonstrated non-constructively to date. Known explicit bipartite dispersers, e.g., those given by Ta-Shma et al. \cite{ta2007lossless}, result in polylogarithmically worse guarantees:
\begin{theorem}[Theorem 1.4~\cite{ta2007lossless}]
    \label{thm:explicit bipartite disperser}
    There exists a deterministic polynomial time algorithm that for any $n, m$, $m \leq n$ outputs a $(k, 1/2)$ bipartite disperser $G(L\sqcup R, E)$ with $|L| = n, |R| = m$, left degree $d = O(\mathrm{poly}(\log n))$ and $k = O(\frac{m\log ^3 n}{d})$. 
\end{theorem}

For the rest of the paper, we will use bipartite dispersers stated by Theorem \ref{thm:bipartite disperser}. Switching to the explicit version of Theorem \ref{thm:explicit bipartite disperser} would result in additional multiplicative polylogs in the communication complexity of our algorithms.  

Let us now discuss a central theorem that paves the way for using bipartite dispersers in distributed computing. At its core, the theorem shows how to assign parties to committees so that, even if the adversary chooses which parties to corrupt after seeing the assignment, most committees contain no byzantine parties.

We will need the following definitions. 
\defCommittee*

Now, the theorem can be stated as follows:
\thmNodesToCommittees*
\begin{proof}
    We construct the stated assignment according to the bipartite disperser graph, interpreting left nodes of the graph as parties, right nodes as committees, and an edge as a party being assigned to a committee. 

    To give a precise argument, we need to unpack constants hidden by O-notation in Theorem \ref{thm:bipartite disperser}. In particular, Theorem \ref{thm:bipartite disperser} states that there are constants $c_d$ and $c_k$ such that for all integers $N, M$ with $M \leq N$ there exists a $(k, \frac{1}{2})$ disperser with $N$ nodes on the left, $M$ nodes on the right, left degree $d \leq c_d\cdot\log N$ and $k \leq c_k\cdot\frac{M}{\log N}$.

    Now, consider such a disperser for $N = n$ and $M = 4c_d\cdot \hat{f}\cdot\log n$.

    Applying the definition of a bipartite disperser, we get:

    \begin{enumerate}[label=Observation \Roman*., leftmargin=*, itemsep=0.6\baselineskip]
    \item Any subset of parties of size at least
      $k = 4c_kc_d\cdot \hat{f}$
      is assigned to at least
      $M/2 = 2c_d\cdot \hat{f}\cdot \log n$
      committees.
    
    \item Furthermore, since one Byzantine party compromises at most $d$ committees,
      and there are at most $\hat{f}$ Byzantine parties, the number of compromised
      committees is at most
      $f\cdot d = c_d\cdot \hat{f} \cdot \log n$.
    \end{enumerate}
    
    As a final step, assume for contradiction that there exists a set $S$ of parties
    of size at least $4c_kc_d\cdot f$ in which every party is blocked. However, by
    Observation~I, parties in $S$ are assigned to at least $2c_d\cdot f\cdot \log n$
    committees, and by Observation~II, at most $c_d\cdot f \cdot \log n$ committees
    are compromised; therefore, some party in $S$ is assigned to a non-compromised
    committee, contradicting that all parties in $S$ are blocked. Letting
    $c_b := 4c_kc_d$ completes the proof.
\end{proof}

\section{Leader-Based Agreement}\label{sec:leader-based-agreement}

In this section, we present a protocol relying on a leader-based approach to achieve an adaptive communication complexity of $\mathcal{O}(n\cdot f)$. This protocol assumes the existence of what we call a retrieval protocol, defined below, which allows a leader to get a proof that a value can be decided. Our protocol only assumes a partially synchronous setting. In fact, by using different retrieval procedure, we can get optimal resiliency for strong unanimity in both the synchronous and partially synchronous setting. Our algorithm is based on a modified version of Spiegelman's byzantine agreement protocol for external validity \cite{spiegelman2020search}, which itself uses Hotstuff's view synchronization \cite{PODC:YMRGA19} technique. While inspired by Spiegelman's protocol for network-agnostic settings \cite{spiegelman2020search}, our work departs significantly in both design and guarantees. We eliminate the asynchronous and randomized components of the original protocol and instead strengthen the synchronous part with additional checks, enabling it to function effectively in partial synchrony after GST. This modification allows us to achieve $\mathcal{O}(n \cdot f)$ message complexity for external validity without relying on full synchrony.

Crucially, by allowing parties to selectively share their commit proofs upon request, we ensure termination in $\mathcal{O}(f)$ rounds while keeping the message complexity at $\mathcal{O}(n \cdot f)$.

We provide an overview of each step and property of the algorithm. For detailed pseudocode, please refer to Appendix \ref{sec:apx:alg gst pseudocode}.

The protocol is organized in views. Each view has a leader, who is chosen in a round-robin way. As a consequence of Lemma \ref{lemma:gst-honest-commit}, if $f \leq \lfloor \frac{n - t -1}{2} \rfloor$, the algorithm is guaranteed to succeed after at most $2f+1$ views after GST. 

\begin{definition}
    We consider two subprotocols \textsc{RetrieveLeader} and \textsc{RetrieveParty}($\vin$). \textsc{RetrieveLeader} is called by a leader while \textsc{RetrieveParty} is called by every honest party. We assume parties know who the leader is and if the leader is honest, then all honest parties call \textsc{RetrieveParty} at most one round after the leader called \textsc{RetrieveLeader}. A call to \textsc{RetrieveLeader} along with its associated \textsc{RetrieveParty} is called an instance. We say that $(\textsc{RetrieveLeader},\textsc{RetrieveParty}(\vin))$ is a retrieval procedure if it satisfies the following properties
    \begin{itemize}
        \item \textsc{RetrieveLeader} either returns $\bot$ or a tuple $(v, proof)$.
        \item In the case the leader returns a tuple $(v, proof)$. $v$ should be a value ($0$ or $1$) that satisfies strong unanimity with $\vin$ and $proof$ can be used by any honest party to locally verify that $v$ can be decided (similar to external validity). Moreover, the output must fit within a constant number of words, i.e $proof$ should only contain a constant amount of signatures and/or threshold signatures.
        \item An instance has a $\mathcal{O}(n)$ message complexity.
        \item When called after GST, the protocol returns within $2$ rounds if the leader is honest.
        \item If called after GST by $f+1$ different honest leaders, this protocol will at least once \textbf{not} return $\bot$.
    \end{itemize}
\end{definition}

Let $k = \lceil \frac{n+t+1}{2}\rceil$. The choice of $k$ is made so that any two sets of $k$ parties have at least one honest party in common. After the model-specific retrieval phase, each party keeps track of the following values:
\begin{itemize}
    \item Key (initially $\bot$): the key is expressed as a value, a view number, and a proof ($(n, k)$-threshold signature). It indicates a value that is safe to choose. 
    \item Lock (initially $\bot$): similar to the key, the lock is expressed as a value, a view number, and a proof. It is used to check the key value proposed by a leader.
    \item Commit (initially $\bot$): the commit is expressed as a value and a proof. We will prove that at most one commit value can be created when running the protocol, and that if a commit value is created, this value will eventually be decided by all honest parties.
\end{itemize}

The algorithm for a single view consists of $4$ phases of a leader asking and other parties responding. A leader proceeds to the next phase once it get the appropriate $k$ partial signatures to generate the proof for the next step. After $4$ phases pass successfully, the leader obtains a commit value and broadcasts it.

We give a brief overview of the role of each phase in a view:
\begin{itemize}
    \item \textbf{Suggestion phase}: The leader requests a value, and the parties respond with a commit value, then a key with the highest view number if they have no commit value, or nothing if they have no commit and no key value. Parties also run the retrieval protocol for the leader to get a valid value to agree on or $\bot$.

    The purpose of this phase is to enable the leader to determine whether any party has already decided (via a commit proof) or was close to deciding (via a key proof). If neither is the case, the leader falls back to the value given by the retrieval protocol. If the retrieval protocol also returns $\bot$, the leader stays silent for the rest of the view.
    
    \item \textbf{Key phase}: 
    The leader proposes either a key with the highest view it observed, or if no key was observed, the value returned by the retrieval protocol, which we assume is not $\bot$ as the leader would otherwise be silent. A party accepts the proposal if it has no lock, or if the proposal includes a key and their current lock has a lower view number than the proposed key. The leader then aggregates the positive responses to generate a key proof.

    The purpose of this phase is to ensure that the value proposed by the leader is consistent with the information parties obtained in earlier rounds. Note that parties compare the proposed key to the lock they have, not to another key; this is to ensure agreement after GST.
    
    \item \textbf{Lock phase}: The leader broadcasts a key proof, parties verify and respond with their accepts, and by aggregating those, the leader creates a lock proof. 
    
    The goal of this phase is for parties to learn about the key proof, which may be used for the following views to ensure agreement.
    
    \item \textbf{Commit phase}: The leader broadcasts a lock proof, parties verify and respond with their accepts, and by aggregating those, the leader creates a commit proof.
    
    The goal of this phase is for parties to learn the lock proof, which is used to check the value proposed by the leader in the key phase.
    
    \item \textbf{Commit broadcast:} The leader broadcasts the commit proof to every party, who can decide it upon receiving.
\end{itemize}

\begin{figure}[h]
    \centering
    \includegraphics[width=\textwidth]{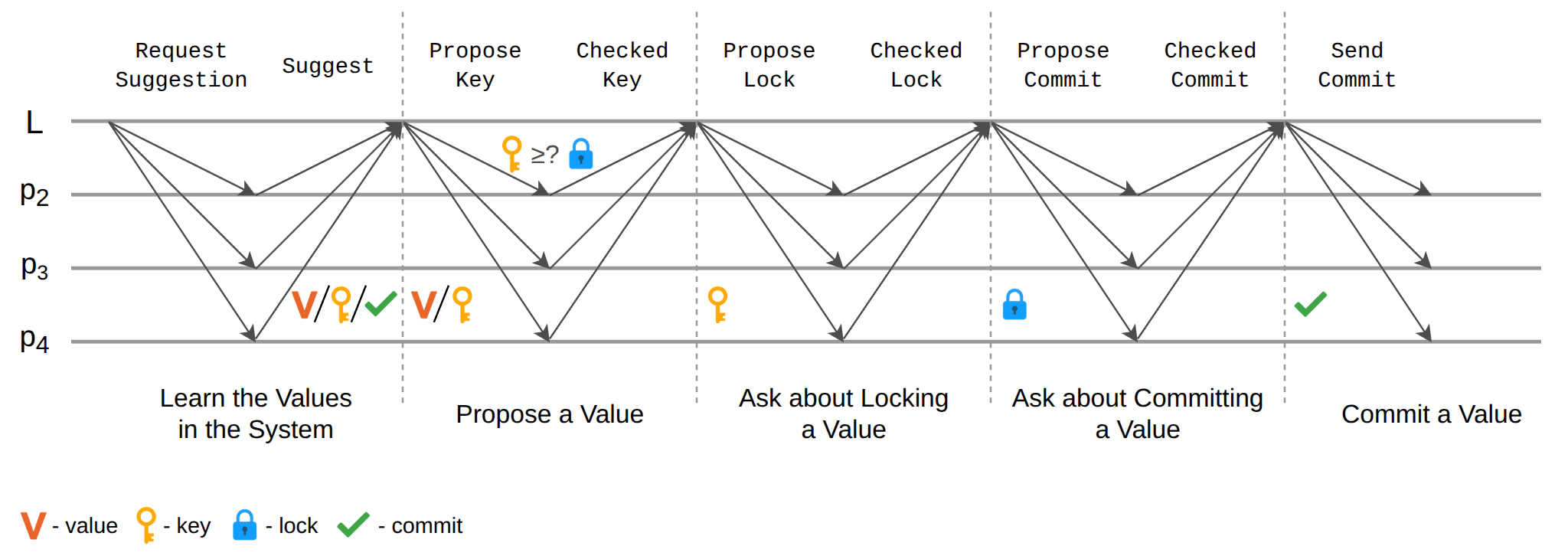}
    \caption{Evolution of a single view in our partially synchronous Byzantine Agreement protocol, assuming an honest leader L and that the system is after GST.}
    \label{fig:gst_alg}
\end{figure}

This algorithm, to which we refer as \textsc{ViewByzantineAgreement} and which is described in \cref{sec:apx:alg gst pseudocode}, solves Byzantine Agreement assuming the actual number of byzantine $f$ is low enough:

\begin{restatable}{theorem}{ViewBaMain}\label{thm:view-ba-main}
    Assuming $f \leq n-k$ (i.e $f \leq \lfloor \frac{n-t-1}{2} \rfloor$) and the existence of a retrieval protocol, then \textsc{ViewByzantineAgreement} achieves agreement, termination and validity. It also returns a proof which can only be obtained for the decided value and has message complexity $\mathcal{O}(nf)$ and round complexity $\mathcal{O}(f)$.
\end{restatable}

\section{Quorum-to-All Phase}
\label{sec:quorum-to-all-sec}

We now give proper definition of a Quorum-to-All (QAB) phase which is used for all $n-f$ honest node to learn the decided value from the quorum of size $3t+1$. A QAB phase is characterized by a parameter $\hat{f}$. The main idea is that if the phase is ran and $f \leq \hat{f}$, then by the end of the phase, all honest nodes will have decided while using only $\mathcal{O}(n \log n + t \cdot f \cdot \log n)$ words and spending a constant amount of rounds.

As said previously, nodes can assume multiple roles including the ones of party, relayer and quorum node. In particular, a single node may be the relayer for multiple committees. We now give an implementation of the $\textsc{QABInitiate}()$ procedure as well as the pseudocode used for parties (\textsc{QABParty}), relayers (\textsc{QABRelayer}) and quorum nodes (\textsc{QABQuorum}).

\begin{dianabox}{\textsc{QABInitiate}}
\algoHead{Protocol for QAB initialization for a party $p$}
\begin{algorithmic}[1]
    \State Run \textsc{QABParty} in parallel
    \For{every committee $C$ where $p$ is the relayer}
        \State Run $\textsc{QABRelayer}(C)$ in parallel
    \EndFor
\end{algorithmic}
\end{dianabox}

\begin{dianabox}{\textsc{QABParty}}
\algoHead{Protocol for a party $p$ running in a QAB phase}
\begin{algorithmic}[1]
    \For{Each committee $C$ which $p$ is a part of, with relayer $r$}
        \UponTrue{Receiving valid $(AGGREGATE, v, proof)$ from $r$}
            \State Send $(ACK, sign_C(ACK))$ to $r$
            \If{$p$ has not yet decided}
                \State Decide $v$
            \EndIf
        \EndUpon
    \EndFor
    \Statex

    \UponTrue{Receiving valid $(DECIDE, v, proof)$}
        \If{$p$ has not yet decided}
            \State Decide $v$
        \EndIf
    \EndUpon
\end{algorithmic}
\end{dianabox}

\begin{dianabox}{\textsc{QABRelayer}}
\algoHead{Protocol for a relayer $r$ of a committee $C$}
\begin{algorithmic}[1]
    \State $Aggregate \gets \bot$
    \State $Signatures \gets \emptyset$
    \State $Waiting \gets \emptyset$
    \UponTrue{Receiving valid $(DISPERSE, v, proof)$}
        \For{Party $p \in C$}
            \State Send $(AGGREGATE, v, proof)$ to $p$
        \EndFor
    \EndUpon
    \Statex
    \UponSimple{Receiving valid $(DISPERSE, v, proof)$ from a quorum node $q$ for the first time}
        \If{$Aggregate \ne \bot$}
            \State Send $(COMPLETED, Aggregate)$ to $q$
        \Else{}
            \State $Waiting \gets Waiting \cup \{q\}$
        \EndIf
    \EndUpon
    \Statex

    \UponSimple{Receiving valid $(ACK, sign)$ from a committee party $p$ for the first time}
        \State $Signatures \gets Signatures \cup \{sign\}$
        \If{$|Signatures| = |C|$}
            \State $Aggregate \gets aggregate(Signatures)$
            \For{$q \in Waiting$}
                \State Send $(COMPLETED, Aggregate)$ to $q$
            \EndFor
        \EndIf
    \EndUpon
\end{algorithmic}
\end{dianabox}

\begin{dianabox}{\textsc{QABQuorum}($\vin$, $proof$, phase parameter $\hat{f}$)}
\algoHead{Protocol for a quorum node $q$}
\begin{algorithmic}[1]
    
    \State $Acknowledged \gets \emptyset$
    \UponSimple{Receiving valid $(COMPLETED, sign)$ from relayer $r$ of committee $C$ for the first time}
        \State $Acknowledged \gets Acknowledged \cup C$
        \If{$|Acknowledged| \geq n - c_b \cdot \hat{f}$}
            \For{party $p \in \mathcal{P} \setminus Acknowledged$}
                \State Send $(DECIDE, \vin, proof)$ to $p$
            \EndFor
            \State We successfully disseminated our value
            \State \Return 
        \EndIf
    \EndUpon
    \Statex

    \For{Every relayer $r$}
        \State Send $(DISPERSE, \vin, proof)$ to $r$
    \EndFor

\end{algorithmic}
\end{dianabox}

We now give multiple properties of this protocol:

\begin{lemma}\label{lemma:quorum-validity}
    An honest party can only decide $\vin$.
\end{lemma}

\begin{proof}
    An honest party only decides a value $v$ if it receives a proof for it along with it. Because a QAB protocol assumes that a proof can only exist for $\vin$, an honest party can only decide $\vin$.
\end{proof}

\begin{lemma}\label{lemma:messages-party-relayer}
    At most $\mathcal{O}(n \log n)$ messages are exchanged between parties and relayers in a phase.
\end{lemma}

\begin{proof}
    A relayer sends at most one message to a party in its committee and a party answers at most once to such message for each committee it is part of. Because committees were made using \cref{thm:nodes2committees}, a party is part of at most $\mathcal{O}(\log n)$ committees. There are $n$ parties hence the $\mathcal{O}(n \log n)$ message bound.
\end{proof}

\begin{lemma}\label{lemma:messages-relayers-quorum}
    At most $\mathcal{O}(t \cdot \hat{f} \cdot \log n)$ messages are exchanged between relayers and quorum nodes.
\end{lemma}

\begin{proof}
    Each of the $\mathcal{O}(t)$ quorum node sends at most one message to the $\mathcal{O}(\hat{f} \log n)$ relayers, which answer at most once after they get an aggregate signature. Therefore, the total message complexity is $\mathcal{O}(t \cdot \hat{f} \cdot \log n)$.
\end{proof}

\begin{lemma}\label{lemma:messages-quorum-parties}
    At most $\mathcal{O}(t \cdot \hat{f})$ messages are exchanged between quorum nodes and parties.
\end{lemma}

\begin{proof}
    A quorum node sends a message directly to a party only when $|Acknowledged| \geq n - c_b \hat{f}$, and it does this only once per party missing in $Acknowledged$. Therefore, it sends at most $c_b \hat{f} = \mathcal{O}(\hat{f})$ messages to parties. Because there are $3t+1$ quorum nodes, the total message complexity is $\mathcal{O}(t \cdot \hat{f})$.
\end{proof}

\begin{lemma}\label{lemma:qab-phase-message-complexity}
    A whole QAB phase has message complexity $\mathcal{O}((n + t \cdot \hat{f}) \cdot \log n)$.
\end{lemma}

\begin{proof}
    We consider all kinds of messages sent by honest parties:
    \begin{enumerate}
        \item Messages exchanged between parties and relayers, which is $\mathcal{O}(n \log n)$ using \cref{lemma:messages-party-relayer}.
        \item Messages exchanged between relayers and quorum nodes, which is $\mathcal{O}(t \cdot \hat{f} \cdot \log n)$ using \cref{lemma:messages-relayers-quorum}.
        \item Messages sent by quorum nodes to parties, which is at most $\mathcal{O}(t \cdot \hat{f})$ using \cref{lemma:messages-quorum-parties}.
    \end{enumerate}
    Putting all of it together, we get a total message complexity of $\mathcal{O}((n + t \cdot \hat{f}) \cdot \log n)$.
\end{proof}

\begin{lemma}\label{lemma:qab-phase-no-honest}
    If no honest quorum node runs \textsc{QABQuorum}, then this phase has message complexity $\mathcal{O}((n + f \cdot \hat{f} )\cdot \log n)$.
\end{lemma}

\begin{proof}
    The $\mathcal{O}(n \log n)$ messages come from \cref{lemma:messages-party-relayer}. We note that if no honest quorum node starts \textsc{QABQuorum}, then any message sent by a quorum node is sent by a byzantine node. Only relayers react to messages sent by quorum nodes and send back a message at most once. As such, the up to $f$ byzantine quorum nodes can only trigger an additional $\mathcal{O}(f \cdot \hat{f} \cdot \log n)$ messages being sent by honest relayers. By summing these two kinds of messages, we get the total complexity. 
\end{proof}

\begin{lemma}\label{lemma:quorum-disseminated}
    If an honest quorum node reports having successfully disseminated its value, then all parties will have decided within one round.
\end{lemma}

\begin{proof}
    If a quorum node $q$ reports it has successfully disseminated its value, it means it sent its value to all parties with were not included in $Acknowledged$. Because a party $p$ gets added to $Acknowledged$ only if the leader receives an aggregate which $p$ must have partially signed, and $p$ is guaranteed to have decided when sending this message, this guarantees all honest parties in $Acknowledged$ have already decided. 

    We remark that $q$ then sends a decide message to all parties not in $Acknowledged$, which are then guaranteed to decide the value when receiving it if they have not done so already. It takes one round for these parties to receive the message, which proves the lemma.
\end{proof}

\begin{lemma}\label{lemma:quorum-honest}
    If an honest quorum node starts \textsc{QABQuorum} and $f \leq \hat{f}$, then it will receive acknowledgments from at least $n - c_b \cdot \hat{f}$ parties and report having successfully disseminated its value within $4$ rounds.
\end{lemma}

\begin{proof}
    If an honest quorum node starts \textsc{QABQuorum}, the following will happen in each round (or already have happened if another party, being honest or byzantine sent a valid message before):
    \begin{enumerate}
        \item 1st round: The quorum node sends a message to every relayer
        \item 2nd round: Each relayer relays the message to every party in its committee.
        \item 3rd round: Every honest party receives the value, decides it if not already done, signs an acknowledgment and sends it to its relayer.
        \item 4th round: Every relayer which receives a partial signature from all of its committee members aggregate them and send the aggregate signature to the quorum node.
        \item End of 4th round: The quorum node receives all aggregate signatures.
    \end{enumerate}
    We note that if $f \leq \hat{f}$, because committees are formed using \cref{thm:nodes2committees}, at most $c_b \cdot \hat{f}$ parties will be blocked. If a party is not blocked, it means it is part of a committee entirely made of honest parties, so the relayer will be able to acquire an aggregate signature and send it to the quorum node within $4$ rounds. Therefore, after $4$ rounds, in the worst case, the quorum node will be missing the signature of the up to $c_b \cdot \hat{f}$ blocked parties (we note that by definition, byzantine parties are always blocked). Therefore we will have $|Acknowledged| \geq n - c_b\hat{f}$ so the quorum node will send its value to the remaining parties and report having successfully disseminated its value.
\end{proof}

\section{Asynchronous Byzantine Agreement}
As a warm-up, we begin with our result for asynchronous Byzantine agreement, which is conceptually the simplest and already illustrates the main ingredients of our approach.

For the quorum agreement in asynchrony, we use a probabilistic protocol from Cachin et al.~\cite{JC:CacKurSho05} which has quadratic communication complexity, resulting in $O(t^2)$ communication for us. 

\subsection{Asynchronous QAB}
\label{sec:async qab}
The asynchronous QAB which we call \textsc{QABAsync} consists of a single QAB phase with parameter $t$, its implementation of \textsc{StartQuorum} is given here:

\begin{dianabox}{\textsc{StartQuorumAsync}$(\vin, \mathit{proof})$}
\algoHead{Implementation of \textsc{StartQuorum} in the asynchronous setting for a quorum node}
\begin{algorithmic}[1]
    \State Run $\textsc{QABQuorum}(\vin, \mathit{proof}, t)$
\end{algorithmic}
\end{dianabox}

\begin{theorem}
    \label{thm:qab async}
    \textsc{QABAsync} protocol solves QAB in asynchrony in the presence of at most $t$ Byzantine faults with communication complexity of $\mathcal{O}((n + t^2)\cdot \log n)$ and round complexity $\mathcal{O}(1)$. 
\end{theorem}
\begin{proof}
    We need to prove round and communication complexity, as well as Complete Correctness.

    \textit{Round complexity.} Because there are $3t+1$ quorum nodes, at least $2t+1 \geq 1$ quorum node will be honest. Therefore, as a consequence of \cref{lemma:quorum-disseminated,lemma:quorum-honest}, the round complexity is $\mathcal{O}(1)$.

    \textit{Communication complexity.} This is a direct consequence of \cref{lemma:qab-phase-message-complexity} with $\hat{f} = t$.
    
    \textit{Complete correctness} Termination is already guaranteed by the round complexity. \cref{lemma:quorum-validity} guarantees all parties decide $\vin$.
\end{proof}

\subsection{Complete Algorithm}
We are now ready to prove Theorem \ref{thm:async main}.
\mainThmAsync*
\begin{proof}
    We start by describing an algorithm that implements BA in asynchrony, and then we show its correctness, round complexity, and message complexity. Our algorithm uses an asynchronous Byzantine Agreement from Cachin et al.~\cite{JC:CacKurSho05}, which we refer to as $C$, and asynchronous QAB. The algorithm goes as follows.

    We use \cref{theo:decoupling} with the asynchronous BA algorithm from Cachin et al.~\cite{JC:CacKurSho05}, which has expected round complexity $\mathcal{O}(1)$ and expected message complexity $\mathcal{O}(n^2)$. We note that this protocol does not return a proof for the value decided. This can be simply worked around by having each party broadcast a $(n, t+1)$-partial signature for the value $v$ they decided, then wait for $t+1$ of these signatures to make a threshold signature which serves as a proof. This takes $1$ round and $n^2$ message complexity. Because this threshold signature requires the partial signature of at least one honest party, which only do it for $v$, this indeed works as a proof. 
    
    For the QAB protocol, we use the one described above which has $\mathcal{O}((n + t^2)\cdot \log n)$ message complexity and $\mathcal{O}(f)$ round complexity. Using \cref{theo:decoupling}, we get a protocol for BA with resiliency $t < n/3$, expected round complexity $\mathcal{O}(1) + \mathcal{O}(1) = \mathcal{O}(1)$ and expected message complexity $\mathcal{O}(t^2) + \mathcal{O}((n + t^2)\cdot \log n) = \mathcal{O}((n + t^2)\cdot \log n)$.
\end{proof}
\subsection{Infeasibility in Asynchrony}
\label{sec:lb}
On the infeasibility side, in the asynchronous setting, we show that any algorithm must incur a communication complexity of at least $\Omega(t^2)$. This leaves essentially no room for adaptivity in asynchronous algorithms and makes the algorithm described earlier in the section nearly optimal.

\thmLB*
We give a proof for this theorem in Appendix~\ref{sec:appendix:async lb}.

\section{Adaptive Byzantine Agreement in Synchrony}

\subsection{Quorum Agreement}
\label{sec:sync nf protocol}

We first start by describing a retrieval protocol for the synchronous setting with optimal resiliency $t < n/2$. The main idea for this retrieval protocol is that parties send a partial signature of their input to the leader. If the leader is not able to form a $t+1$ threshold signature for either $0$ or $1$, then it knows that both $0$ and $1$ can be decided (although it cannot prove it to other parties). In this case the leader sets its own value to $\bot$ and sends partial signatures for both $0$ and $1$ in the following rounds. We show that using this approach a proper proof can always be obtained after having $f+1$ distinct honest leaders.

\begin{dianabox}{\textsc{RetrievalLeaderSync}}
\algoHead{Retrieval protocol for the leader in synchrony}
\begin{algorithmic}
    \State Wait for $2\Delta$ units of time
    \If{there exists $\textbf{value} \in \{0,1\}$ which received at least $t+1$ partial signatures $(\rho_p)$}
        \State $\textbf{proof} \gets tcombine(\textbf{value}, (\rho_p))$
        \State \Return $(\textbf{value}, \textbf{proof})$
    \Else{}
        \State $\vin \gets \bot$
        \State \Return $\bot$
    \EndIf

\end{algorithmic}
\end{dianabox}

\begin{dianabox}{\textsc{RetrievalPartySync}($\vin$)}
\algoHead{Retrieval protocol for parties in synchrony}
\begin{algorithmic}
    \If{$\vin \ne \bot$}
        \State Send $(\vin, tsign(\vin))$ to the leader
    \Else{}
        \State Send both $(0, tsign(0))$ and $(1, tsign(1))$ to the leader
    \EndIf

\end{algorithmic}
\end{dianabox}

\begin{lemma}\label{lemma:bot-is-good}
    If an honest party sets its input to $\bot$, then both $0$ and $1$ can be decided according to strong unanimity.
\end{lemma}

\begin{proof}
    We first note that the input value $\vin$ of an honest party can only be changed to $\bot$ in one way. Therefore, if an honest party has input $\vin \in \{0,1\}$, it will always send a partial signature for $\vin$ when running the retrieval protocol.

    We consider the case where a party sets its input to $\bot$, it means it could not get a $t+1$ signatures for either $0$ or $1$. Using the previous part, it implies that at most $t$ honest parties had $0$ as input and that the same holds for $1$. Because $n \geq 2t+1$, we have $n-t > t$ so it implies that not all honest parties had the same input (which can only be $0$ or $1$). Therefore, according to strong unanimity, any value can be decided.
\end{proof}

\begin{lemma}\label{lemma:sync-retrieval-proof-good}
    If $\textsc{RetrievalLeaderSync}$ returns a non-$\bot$ value, then this value satisfies strong unanimity and the proof can be used to check it.
\end{lemma}

\begin{proof}
 Assume that the retrieval protocol returns $(v, proof)$. The proof shows that $t+1$ parties sent a partial signature to the leader with the value $v$. This implies that an honest party sent a partial signature for the value $v$. This can only happen if its input $v_{in}$ is $v$, in which case $v$ satisfies strong unanimity, or if it is $\bot$, in which case using \cref{lemma:bot-is-good}, $v$ also satisfies strong unanimity.
\end{proof}

\begin{lemma}\label{lemma:sync-retrieval-f1-proof}
    After the retrieval protocol is called by $f+1$ distinct honest leaders, it will return a non-$\bot$ value at least once.
\end{lemma}

\begin{proof}
    Let $nb_{0}$ (resp. $nb_1$) be the number of partial signatures that an honest leader would receive for $0$ (resp. $1$) from honest parties. Because every honest party starts by sending a threshold signature for either $0$ or $1$, at the beginning, we have $nb_{0} + nb_{1} = n-f$. If an honest party becomes leader for the first time and returns $\bot$, it will start sending partial signatures for both $0$ and $1$ instead of only its input value in subsequent calls. Thus $nb_{0} + nb_{1}$ will increment.

    We now look at the state of honest parties after $f$ distinct honest leaders have been using the retrieval protocol. If any of them returned a non-$\bot$ value, we are done. Otherwise, because $nb_{0} + nb_{1}$ starts at $n-f$ and increases by $1$ every time a new honest leader returns $\bot$, it means that by the time the ($f+1$)-th honest leader run the protocol, we will have $nb_{0} + nb_{1} \geq n-f+f = n \geq 2t+1$. Just looking at honest partial signatures, we will have at least $2t+1$ such signatures for either $0$ or $1$, so one of the two will have at least $t+1$ signatures and we will be able to get a proof and return a value.
\end{proof}

\begin{theorem}
    The pair $(\textsc{RetrievalLeaderSync}, \textsc{RetrievalPartySync})$ is a valid retrieval protocol in synchrony with resiliency $t < n/2$.
\end{theorem}

\begin{proof}
    Every party only sends one or two messages, so the message complexity is $\mathcal{O}(n)$. If the leader is honest, the protocol will return within exactly $2$ rounds. Using \cref{lemma:sync-retrieval-proof-good}, if a non-$\bot$ value is returned, then it satisfies strong unanimity and the proof certifies it. Using \cref{lemma:sync-retrieval-f1-proof}, when called by $f+1$ distinct honest leaders, it will not always return $\bot$. Therefore, it is a valid retrieval protocol.
\end{proof}

Using this retrieval protocol with our view-based protocol, as a consequence of \cref{thm:view-ba-main}, we obtain the following result:

\begin{theorem}\label{theo:view-quorum-algo}
In the synchronous setting, with $t < n/2$ and using the retrieval protocol $(\textsc{RetrievalLeaderSync}, \textsc{RetrievalPartySync})$, \textsc{ViewByzantineAgreement} achieves agreement and validity. Moreover, if $f < \lfloor \frac{n - t -1}{2} \rfloor$, it also satisfies termination within $\mathcal{O}(f)$ rounds and uses $\mathcal{O}(n\cdot (f + 1))$ messages.
\end{theorem}

We remark that if $f \leq \lfloor \frac{n - t -1}{2} \rfloor \approx n/4$, this protocol satisfies byzantine agreement. However, in the other case, some parties may never be able to decide because getting a $k$-threshold certificate is impossible if byzantine parties do not cooperate. We remark that in this latter case, $f = \Omega(n)$, and use the same technique as \cite{cohen2023make}, which is to detect this case and use a fallback protocol $\mathcal{A}_{fallback}$ with quadratic complexity $\mathcal{O}(n^2)$. The fallback protocol used is the one by Ren and Momose \cite{momosequadraticba} which has quadratic message complexity, $\mathcal{O}(n)$ round complexity, as proven in \cref{theo:momose-good} and optimal resiliency $t < n/2$. We also simplify the fallback detection approach used by Cohen et al. While their protocol may cause honest parties to still send messages an arbitrary amount of time after the start of the algorithm, our approach guarantees that no message will be sent by honest parties after $\mathcal{O}(n)$ rounds, so this protocol does not have to run infinitely. Moreover, while Cohen et al. have to run the the fallback protocol with a message delay $\delta$ twice as big as the original one, our approach does not have this issue. We give the full pseudocode and proof in \cref{sec:sync-proto-ba}.

\syncNf*

\subsection{QAB in Synchrony}
\label{sec:sync qab}
In this section, we describe and analyze our synchronous QAB protocol, which we call \textsc{QABSync}. It relies on $\mathcal{O}(\log t)$ QAB phases and a specific implementation of \textsc{StartQuorum} called \textsc{StartQuorumSync}.

\paragraph*{Protocol Description}
In synchronous Quorum to All Broadcast, we aim for the $O(n + tf)$ communication complexity and $O(f)$ rounds latency. However, the challenge is that processes do not know the value of $f$. To account for this, we do an \emph{exponential search} for $f$. That is, we try to solve the problem assuming $f$ is no more than $1$, then if it fails, we assume $f$ is no more than $2$, then $4$, $8$, and so on up to $2^{\lceil \log t \rceil}$. We will denote the current guess on $f$ with $\hat{f}$. The synchronous QAB then consists of launching QAB phase for every guess $\hat{f}$ until one successfully terminates:

\begin{dianabox}{\textsc{StartQuorumSync}$(\vin, proof)$}
\algoHead{Implementation of \textsc{StartQuorum} in the synchronous setting for a quorum node}
\begin{algorithmic}[1]
    \For{$i \in \{0,1,\ldots, \lceil \log t \rceil \}$}
        \State $\hat{f} \gets 2^i$
        \State Run $\textsc{QABQuorum}(\vin, proof, \hat{f})$ for $4$ rounds
        \If{the quorum node successfully disseminated its value}
            \State \Return
        \EndIf
    \EndFor
\end{algorithmic}
\end{dianabox}


\paragraph*{Protocol Analysis}

\begin{theorem}
    \label{thm:qab sync}
    \textsc{QABSync} solves QAB in synchrony in the presence of at most $t$ Byzantine faults with communication complexity of $\mathcal{O}((n \log t  + t \cdot f) \cdot \log n)$ and round complexity $\mathcal{O}(\log f)$. 
\end{theorem}
\begin{proof}
    \textit{Round complexity.} Let $\hat{f} = 2^{\lceil \log f \rceil}$, we have $\hat{f} \geq f$, so as a consequence of \cref{lemma:quorum-honest}, if an honest quorum runs the QAB phase with parameter $\hat{f}$, it will report having successfully disseminated its value within $4$ rounds and therefore won't run any QAB phase with higher parameter. Because there is at least one honest quorum node (the quorum size is $3t+1$), it will therefore report having successfully disseminated its value by after at most $\lceil \log f \rceil + 1$ QAB phases, so using \cref{lemma:quorum-disseminated}, all honest parties will have decided within round $4(\lceil \log f \rceil + 1) + 1$. So the round complexity is $\mathcal{O}(\log f)$.
    We need to prove round and communication complexity, as well as Complete Correctness.

    \textit{Communication complexity.} We saw that all honest quorum nodes stop the protocol by phase $2^{\lceil \log f \rceil}$, each of these phases has message complexity $\mathcal{O}((n+t\cdot\hat{f}) \cdot \log n)$ with $\hat{f}$ being the phase parameter according to \cref{lemma:qab-phase-message-complexity}, so the overall message complexity of this part is:
    \begin{align*}
        \sum_{i=0}^{\lceil \log f \rceil} \mathcal{O}((n+t\cdot 2^i) \cdot \log n) = \mathcal{O}((n \cdot \log f + t \cdot f) \cdot \log n)
    \end{align*}
    We note that byzantine parties may still trigger messages to be sent by honest parties in the upper phases, but using \cref{lemma:qab-phase-no-honest}, at most $\mathcal{O}((n + f \cdot \hat{f}) \log n)$ such messages are sent for phase parameter $\hat{f}$, summing over all possible phases, we get:
    \begin{align*}
        \sum_{i=0}^{\lceil \log t \rceil} \mathcal{O}((n + f \cdot 2^i) \log n) = \mathcal{O}((n \cdot \log t + f \cdot t) \cdot \log n)
    \end{align*}
    Putting these two together, we get the expected message complexity.
    
    \textit{Complete correctness} Termination is already guaranteed by the round complexity. \cref{lemma:quorum-validity} guarantees all parties decide $\vin$.
\end{proof}

\subsection{Complete Algorithm}
We can now compose our BA protocol and QAB to obtain an almost-optimal adaptive consensus. 
\mainThmSync*
\begin{proof}
    First, we remark that if $t \geq n/3$ (or more generally if $t = \Omega(n)$), then the protocol from \cref{thm:sync nf} has the correct message and round complexity.

    We can now assume that $t < n/3$. We will apply \cref{theo:decoupling}. We use the BA protocol from \cref{theo:gst-main-ba} which has resiliency $t < n/3$, message complexity $\mathcal{O}(n \cdot f)$ and round complexity $\mathcal{O}(f)$ (we note that this is a protocol for the partially synchronous setting, but as a consequence, it also works in the synchronous setting). We use the QAB described above from \cref{thm:qab sync} which has message complexity $\mathcal{O}((n \log t  + t \cdot f) \cdot \log n)$ (regardless of when quorum nodes get their value) and round complexity $\mathcal{O}(\log f)$. With \cref{theo:decoupling}, we get a protocol for BA with message complexity $\mathcal{O}(n \cdot f + (n \log t  + t \cdot f) \cdot \log n) = \mathcal{O}((n \log t  + t \cdot f) \cdot \log n)$ and round complexity $\mathcal{O}(f + \log f) = \mathcal{O}(f)$.
\end{proof}

\section{Adaptive Byzantine Agreement in Partial Synchrony}
\label{sec:agreement gst}

\subsection{Quorum Agreement}
We now describe a protocol to achieve $\mathcal{O}(n\cdot (f+1))$ message complexity in the partially synchronous setting. Our protocol uses the same approach as in the synchronous setting. We note that, because we only support $t < n/3$ instead of $t < n/2$, our protocol becomes much simpler. For example, the view-based approach only terminates if $f \leq \lfloor \frac{n - t - 1}{2} \rfloor$. However, if $t < n/3$, then $t \leq \lfloor \frac{n - t - 1}{2} \rfloor$, so the view-based approach always terminates and a fallback is not necessary.

We start with the retrieval protocol. This retrieval protocol is relatively easy: each party sends a ($t+1$)-threshold partial signature for their input value to the leader. Because $n \geq 3t+1$, the leader is guaranteed to be able to get a full threshold signature for either $0$ or $1$:

\begin{dianabox}{\textsc{RetrievalLeaderGST}}
\algoHead{Retrieval protocol for the leader in partial synchrony}
\begin{algorithmic}
    \State Wait for $n-t$ valid partial signatures
    \State Let $\textbf{value} \in \{0,1\}$ a value with at least $t+1$ partial signatures $(\rho_p)$
    \State $\textbf{proof} \gets tcombine(\textbf{value}, (\rho_p))$
    \Return $(\textbf{value}, \textbf{proof})$

\end{algorithmic}
\end{dianabox}

\begin{dianabox}{\textsc{RetrievalPartyGST}$(\vin)$}
\algoHead{Retrieval protocol for parties in partial synchrony}
\begin{algorithmic}
    \State Send $(\vin, tsign(\vin))$ to the leader

\end{algorithmic}
\end{dianabox}

\begin{theorem}
    The pair ($\textsc{RetrievalLeaderGST}$, $\textsc{RetrievalPartyGST}$) form a valid retrieval protocol in partial synchrony with resiliency $t < n/3$.
\end{theorem}

\begin{proof}
    We remark that $\textsc{RetrievalLeaderGST}$ always return a value-proof tuple. Moreover, the proof is a ($t+1$)-threshold signature of parties certifying they have the same input value $v$. This implies that at least one honest party has this value and therefore that $v$ can be decided.

    For the message and round complexity, we observe that its message complexity is $\mathcal{O}(n)$. Moreover, after GST, the leader will have to wait at most $1$ round before receiving $n-t$ inputs from honest parties.

    Finally, after $GST$, if the leader is honest, then it will receive $n-t$ suggestions. We remark that $3t + 1 \leq n$ so the leader will receive at least $2t+1$ partial signatures for either $0$ or $1$. As a consequence, either $0$ or $1$ will receive at least $t+1$ partial signatures and the leader will be able to make a proof.
\end{proof}

Using this retrieval protocol along with the view-based result from \cref{thm:view-ba-main}. Because $t < n/3$ implies $f \leq t \leq \lfloor \frac{n - t - 1}{2} \rfloor$, we immediately get an optimal protocol for the partially synchronous setting:

\thmGSTnf*

\subsection{QAB in Partial Synchrony}
\label{sec:qab partial sync}
In this section, we present an algorithm that solves QAB sending $\mathcal{O}((n + tf) \log t \cdot \log n)$ words after GST in the partially synchronous setting with round complexity $\mathcal{O}(f)$.

Our algorithm uses the QAB protocol for synchrony and adapts it to the partially synchronous setting by making two major modifications: running all phases \textbf{in parallel} and running them \textbf{slowly}:
\begin{itemize}
    \item \textbf{In parallel}: The serial approach used in synchrony does not work for the partially synchronous setting, as a phase $\hat{f}$ may fail to disseminate its value if ran before GST, even if $\hat{f} \geq f$. Instead, we run all $\log t$ phases in parallel. We note that the last phase has communication complexity $\Omega(t^2)$. In order not to reach this point, we use our second trick to ensure that higher phases never get to run in their entirety.
    \item \textbf{Slowly}: Instead of sending the value and proof to all relayers at once, we do it one at a time. To be more precise, every two rounds, we send the value and proof to a new relayer (if we sent the value to all relayers, we start again from the beginning). Similarly, instead of sending the value to the remaining $c_B \cdot \hat{f}$ parties all at once, we do it one party per round. Doing so, we can ensure that quorum nodes and relayers together only send $\mathcal{O}(t \log t)$ messages each round. By then showing that the protocol finishes within $\mathcal{O}(f \log n)$, we get immediately the message complexity as a bonus.
\end{itemize}

To do this, we run $\mathcal{O}(\log t)$ custom QAB phases, where \textsc{QABRelayer} is replaced \textsc{QABRelayerGST} and \textsc{QABQuorum} is replaced \textsc{QABQuorumGST}. Along with the implementation of \textsc{StartQuorum} given below, we call the resulting protocol \textsc{QABPartialSync}. These protocols are described in \cref{sec:apx:qab phase gst}. We note that this only changes the pace at which messages are exchanged between quorum nodes on one side and relayers and parties on the other side. As such, properties on a quorum phase which do not address message or round complexity, or are about the interaction between relayers and parties are still preserved. These protocols also rely on views, where we assume that each view lasts for at least $2\Delta$ units of time.

\begin{dianabox}{\textsc{StartQuorumGST}$(\vin, proof)$}
\algoHead{Implementation of \textsc{StartQuorum} in the partially synchronous setting for a quorum node}
\begin{algorithmic}[1]
    \InParallel{for $i = 0,1,\ldots, \lceil \log t \rceil$}
        \State $\hat{f} \gets 2^i$
        \State Run $\textsc{QABQuorumGST}(\vin, proof, \hat{f})$
    \EndParallel
    \State Stop all QAB quorum protocols once one of them is done disseminating its value
\end{algorithmic}
\end{dianabox}

\begin{lemma}\label{lemma:qab-gst-round-complexity}
    \textsc{QABPartialSync} protocol has $\mathcal{O}(f \log n)$ round complexity.
\end{lemma}

\begin{proof}
    The adaptive BA algorithm used has $\mathcal{O}(f)$ round complexity. Let $\hat{f} = 2^{\lceil \log f \rceil} \geq f$. As stated in \cref{lemma:quorum-honest}, after running the phase with parameter $\hat{f}$, the quorum node is missing the proof that at most $c_B \cdot \hat{f} = \mathcal{O}(f)$ parties received the value. After GST, to gather all aggregate signature for phase $\hat{f}$, because the proof is distributed to relayers one every two rounds and no guarantee can be made to attempts made before GST, the number of rounds taken is directly proportional to the number of relayers, which is $\mathcal{O}(\hat{f} \log n) = \mathcal{O}(f \log n)$. After getting all of these proofs, the quorum node knows that at most $c_B \cdot \hat{f} = \mathcal{O}(f)$ parties are missing a proof. Because it sends the proof to parties one at a time, this takes an additional $\mathcal{O}(f)$ rounds. Therefore, the phase with parameter $\hat{f}$ is guaranteed to be done within $\mathcal{O}(f \log n)$ after GST. We note that another phase may be done before, but in any case, if a phase is done for an honest quorum node, then using \cref{lemma:quorum-disseminated}, all honest parties will decide within $1$ round (or $1$ round after GST if it has not been reached yet, which would give a round complexity of 1).
\end{proof}

\begin{lemma}\label{lemma:qab-gst-round-msg}
    At most $\mathcal{O}(t \log t)$ messages are sent every view by quorum nodes and by relayers to quorum nodes.
\end{lemma}

\begin{proof}
    We look inside a phase $\hat{f}$. We remark that by design, every view, a single relayer will accept messages from quorum nodes (in a circular way) and sends at most one message per quorum node in the same view. Because this is the only kind of message relayers send to honest parties and the quorum has size $\mathcal{O}(t)$, this in total amounts to $t$ messages from relayers to quorum node. Looking at a single quorum node, every view, it sends one message to a relayer and two decide messages to parties, so their total message complexity is also $\mathcal{O}(t)$. When summing it over all $\mathcal{O}(\log t)$ phases, we get the expected $\mathcal{O}(t \log t)$ messages.
\end{proof}

\begin{lemma}\label{lemma:qab-gst-msg-complexity}
    \textsc{QABPartialSync} protocol has $\mathcal{O}((n + tf) \log t \cdot \log n + l \cdot t \cdot \log t)$ word complexity, where $l$ is the number of rounds after GST for the first quorum node to call $\textsc{StartQuorum}(\vin, proof)$.
\end{lemma}

\begin{proof}
    As stated in \cref{lemma:messages-party-relayer}, at most $\mathcal{O}(n \log n)$ messages are exchanged per phase between parties and relayers. There are $\mathcal{O}(\log t)$ phases which gives a $\mathcal{O}(n \log n \log t)$ message complexity for this part. All other messages are accounted by \cref{lemma:qab-gst-round-msg} which proves that at most $\mathcal{O}(t \log t)$ are sent every view. Because of \cref{lemma:qab-gst-round-complexity}, quorum nodes finish running their phase protocol within $\mathcal{O}(l + f \log n)$ rounds. Because each view consists of $\mathcal{O}(1)$ rounds, this gives in total a $\mathcal{O}(t \log t \cdot (l + f \log n))$ message complexity. We note that even after honest quorum nodes are done, byzantine quorum nodes can still ask relayers for an aggregate value. However the relayer will only respond once per quorum node, so the total overhead from this is $\mathcal{O}(\hat{f} \log n \cdot f)$ per phase so $\mathcal{O}(t \log n \cdot f)$ when summing over all phases. Summing all of these together, we get the expected round complexity.
\end{proof}

\begin{lemma}\label{lemma:qab-gst-valid}
    \textsc{QABPartialSync} protocol is a valid QAB protocol.
\end{lemma}
\begin{proof}
    \cref{lemma:qab-gst-round-complexity} covers termination while \cref{lemma:quorum-validity} covers validity of the QAB protocol.
\end{proof}

\subsection{Complete Algorithm}
\mainThmGST*
\begin{proof}
We will use \cref{theo:decoupling}. We consider our BA protocol from \cref{theo:gst-main-ba} which has round complexity $\mathcal{O}(f)$ and message complexity $\mathcal{O}(n \cdot f)$, along with the previous QAB protocol which according to \cref{lemma:qab-gst-round-complexity,lemma:qab-gst-msg-complexity,lemma:qab-gst-valid} has round complexity $\mathcal{O}(f \log n)$ and message complexity $\mathcal{O}((n + tf) \log t \cdot \log n + l \cdot t \cdot \log t)$ where $l$ is the number of rounds after GST for the first quorum node to call $\textsc{StartQuorum}(\vin, proof)$.

Using \cref{theo:decoupling}, we obtain a BA protocol with resiliency $t < n/3$, round complexity $\mathcal{O}(f) + \mathcal{O}(f \log n) = \mathcal{O}(f \log n)$, and message complexity $\mathcal{O}(tf) + \mathcal{O}((n + tf) \log t \cdot \log n + \mathcal{O}(f) \cdot t \cdot \log t) = \mathcal{O}((n + tf) \log t \cdot \log n)$.
\end{proof}

\section{Conclusion}

In this work, we study the communication and round complexity of deterministic binary Byzantine Agreement across all major timing models: synchronous, partially synchronous, and asynchronous networks.

Our main contribution is twofold. First, we present optimal adaptive algorithms that simultaneously achieve optimal communication and round complexity with optimal resilience. Second, for the high-scale regime where the total number of nodes significantly exceeds the tolerable faults, we achieve near-optimal communication and round complexity.

A key technical ingredient is a deterministic committee assignment scheme obtained via bipartite dispersers. This approach enables efficient information dissemination while remaining robust against adaptive adversaries.

Our work opens natural directions for future research. One is to remove the polylogarithmic factors and achieve truly optimal complexity, which likely requires fundamentally new techniques beyond current disperser constructions. Another is to extend our results to multi-valued Byzantine Agreement. Finally, regarding partial synchrony, in this paper, we assume the presence of an independent view synchronization protocol. This can be achieved by relying on specific clock assumptions. Doing so without any assumption on clocks remains an open question.

\clearpage
\newpage

\begingroup
\emergencystretch=2em
\bibliographystyle{ACM-Reference-Format}

\bibliography{bib/refs,bib/abbrev3,bib/crypto_crossref}
\endgroup

\newpage
\appendix 
\section*{Appendix}

\section{Lower Bound in Asynchrony}
\label{sec:appendix:async lb}
In this section, we prove the lower bound theorem, which implies that no adaptive communication complexity is possible in an asynchronous setting.
\thmLB*
\begin{proof}
    In this proof, by the \emph{setup} we mean a combination of identities of faulty parties, proposals of honest parties, behavior of faulty parties and the schedule of messages. Let $M$ be a random variable denoting the number of messages sent by the protocol.

    We will prove that there exists a setup \texttt{Main} in which no party is byzantine and $\mathbb{E}(M) \geq t^2/4$.
    
    \texttt{Main} is defined as follows. We partition the parties $\mathcal{P}$ into two sets $B$ and $C$ such that $|C| = t/2$ and $|B| = n - t/2$.
    \begin{itemize}
        \item All parties in $\mathcal{P}$ are honest and have input $0$.
        \item Messages between parties in $C$ are delayed until after everyone decides.
        \item For every party $p \in C$, the first $t/2$ messages sent to $p$ from processes in $B$ are delayed until after everyone decides.
        \item The other messages are delivered immediately.
    \end{itemize}

    We want to show that this scheduling is valid for the asynchronous model. I.e., we want to show that almost surely, all parties will decide before any of the delayed messages are received.

    We first prove this for every party in $B$. To do that, we consider the setup $S_B$ for which we'll show that (i) all processes in $B$ must decide and (ii) processes in $B$ can not distinguish $S_B$ from \texttt{Main}. $S_B$ is defined as follows:
    \begin{itemize}
        \item All parties in $B$ are honest and have input $0$.
        \item All parties in $C$ are byzantine. They act as if they were executing $\mathcal{A}$ with input $0$, but they ignore all messages from other parties in $C$ and the first $t/2$ messages from parties in $B$.
        \item All messages are immediately delivered.
    \end{itemize}
    Clearly, parties in $B$ cannot distinguish $S_B$ and \texttt{Main}. Moreover, in $S_B$, all messages are eventually delivered, and less than $t$ parties are byzantine. Therefore, $\mathcal{A}$ must ensure probabilistic termination, so parties in $B$ will eventually decide.

    Now, let us prove that for every party $p$ in $C$ it eventually decides in \texttt{Main}. To do so, we introduce a family of setups $\{S_Q\}$, for all $Q \subseteq B$ such that $|Q| = t/2$. $S_Q$ is defined as follows:
    \begin{itemize}
        \item All parties in $B \setminus Q$ are honest and have input $0$.
        \item $p$ is honest and has input $0$.
        \item All parties in $C \setminus \{p\}$ are byzantine. They act as if they were executing $\mathcal{A}$ with input $0$ but ignore messages from other parties in $C$ as well as the first $t/2$ messages from parties in $B$. Moreover, they do not send any message to $p$.
        \item All parties in $Q$ are byzantine. They act as if they were executing $\mathcal{A}$ with input $0$ but do not send the first $t/2$ messages to $p$.
        \item All messages are immediately delivered.
    \end{itemize}
    In setting $S_Q$, $t/2 + t/2 - 1 < t$ parties are byzantine and all messages are eventually delivered. Therefore, $A$ must ensure probabilistic termination so all honest parties, and in particular $p$, will eventually decide.
    
    For any execution of $\mathcal{A}$ in \texttt{Main}, there exists a $Q \subseteq B$ such that $|Q| = t/2$ and $p$ cannot distinguish between \texttt{Main} and $S_Q$. Since there is only a finite number of such setups $S_Q$ and in any of these setups $p$ eventually decides, it implies that $p$ will eventually decide in \texttt{Main}. Therefore, \texttt{Main} is covered by protocol $\mathcal{A}$, so the behavior of all parties should satisfy probabilistic termination, agreement, and validity. Using validity and the fact that every party has input $0$, we obtain that every party will eventually decide $0$ in \texttt{Main}.

    We will now show that if for \texttt{Main} the expected number of messages sent is less than $t^2/4$, then some party in $C$ decides $0$ without receiving any message with a non-zero probability.
    
    Having $\mathbb{E}(M) < t^2/4$ and using Markov's inequality, we get that:
    \begin{align*}
        \mathbb{P}(M \geq t^2/2) \leq 0.5
    \end{align*}
    For a party $p \in C$, let $E_p$ be the event that no more than $t/2$ messages are sent to $p$ from $B$. We remark that $\bigcap_{p\in C} \overline{E_p} \subset \{ M \geq t^2/2\}$. Using the previous inequality and by taking the complement, we get:
    \begin{align*}
        \mathbb{P}\left(\bigcup_{p \in C} E_p \right) &\geq 0.5 
    \end{align*}
    and by Union bound
    \begin{align*}
        \sum_{p \in C} \mathbb{P}(E_p) &\geq 0.5.
    \end{align*}
    Therefore, there exists $p \in C$ such that $\mathbb{P}(E_p) \geq 1/(2|C|) = 1/t$.

    Because the first $t/2$ messages are delayed after the decision, under the assumption that $p$ decides, which happens almost surely, this means that $p$ will decide its output value with probability at least $1/t > 0$ without receiving any message. Conditioning on this event and because deciding happens almost surely, this implies that there exists $T > 0$ such that $p$ will decide a value within time $T$ without receiving any message with probability at least $1/t > 0$.

    We derive a contradiction from this last fact. We consider the following \texttt{Final} setup:
    \begin{itemize}
        \item All parties are honest.
        \item Parties in $\mathcal{P}\setminus \{p\}$ have input $1$.
        \item Party $p$ has input $0$.
        \item All messages are delayed for the first $T$ units of time, afterwards they are received immediately.
    \end{itemize}
    We first consider \texttt{Final'} with the slight difference that $p$ is byzantine, but acting as honest with input $0$. In this case, because of validity, protocol $\mathcal{A}$ guarantees that parties in $\mathcal{P}\setminus \{p\}$ decide $1$. Moreover, these parties cannot distinguish \texttt{Final} and \texttt{Final'}, so they will also decide $1$ in \texttt{Final}. However, for the first $T$ units of time, $p$ cannot distinguish \texttt{Final} and \texttt{Main}. Therefore, it will decide $0$ with probability at least $1/t > 0$. However, all parties are honest and this breaks the agreement with a non-zero probability (one honest party decides $0$, the others decide $1$), hence the contradiction.
\end{proof}

\section{Quorum Byzantine Agreement}
\label{sec:app-quorum-ba}
In this section, we give a pseudocode as well as formal proofs for out partially synchronous agreement algorithm.

\subsection{View-Based Protocol}
\label{sec:apx:alg gst pseudocode}
This section provides a pseudocode for view-based byzantine agreement protocol. The protocol uses the threshold value $k = \lceil \frac{n+t+1}{2} \rceil$ for the threshold signatures.

In the protocol, for every message sent, we implicitly bundle along it the view number when it was sent. If a message is received by a party with a view number different than the current one, it is ignored. This prevents the adversary from using messages from previous views in case a party becomes leader multiple times. The only exceptions are the \textsc{SendCommit} message which is always accepted if the commit is valid and $\textsc{Suggest}((COMMIT,-))$ message which a party will always save and use the next time it becomes a leader.

\begin{dianabox}{\textsc{ViewByzantineAgreement}($\vin$)}
\algoHead{Protocol for Byzantine Agreement in partial synchrony}
\begin{algorithmic}
\State $\textbf{key} \gets \bot$
\State $\textbf{lock} \gets \bot$
\State $\textbf{commit} \gets \bot$

\Statex{$\triangleright$  Decide as soon as we get a commit value}
\UponTrue{Receiving valid $\textsc{SendCommit}(\textbf{value}, \textbf{proof})$}
    \If{$\textbf{commit} = \bot$}
        \State $\textbf{commit} \gets (\textbf{value}, \textbf{proof})$
        \State Decide \textbf{value}
    \EndIf
\EndUpon
        
\Statex{$\triangleright$  Run the view-based protocol}
\For{view number $view \gets 0,1, \ldots$}
    \If{This party is the current leader}
        \If{$\textbf{commit} = \bot$}
            \State Run ViewLeaderProtocol() in parallel for duration $11\Delta$
        \Else{}
            \UponSimple{Receiving $(\textsc{Complain}$) from a party $v$ for the first time}
                \State Send $\textsc{SendCommit}(\textbf{commit})$ to $v$
            \EndUpon
        \EndIf
    \EndIf
    \State Run ViewPartyProtocol($\vin$, $view$, \textbf{key}, \textbf{lock}, \textbf{commit}) for duration $11\Delta$
\EndFor

\end{algorithmic}
\end{dianabox}

\begin{dianabox}{\textsc{ViewLeaderProtocol}}
\algoHead{View-based protocol part exclusive to the leader}

\begin{algorithmic}[1]
        \Statex{$\triangleright$  Choose which value to propose}
        \State Broadcast \textsc{RequestSuggestion}
        \State Wait for valid \textsc{Suggest}(m) from $k$ parties
        \If{one of the value is a commit $C$ with value $v$:}
            \State $\textbf{value} \gets v$
            \State $\textbf{commit} \gets C$
            \State Jump to broadcasting the commit value
        \ElsIf{One of the value is a key value:}
            \State Let $key$ be the key with the highest view number and $v$ its value
            \State $\textbf{value} \gets v$
            \State $\textbf{prop} = (KEY, key)$
        \Else{}
            \State Broadcast \textsc{RunRetrieval}
            \State $(\textbf{value}, \textbf{proof}) \gets \textsc{RetrievalLeader}()$
            \If{$\textbf{value} = \bot$}
                \State No value to propose
                \State Stay silent for the rest of the round
            \EndIf
            \State $\textbf{prop} = (COMBINE, proof)$
        \EndIf

        \Statex
        \Statex{$\triangleright$  Approve the proposed value and get a key}
        \State Send \textsc{ProposeKey}($\textbf{value}, \textbf{prop}$) to every party.
        \State Wait for valid \textsc{CheckedKey}($\rho_p$) from $k$ parties
        \State $\textbf{key\_proof} \gets tcombine((KEY, \textbf{value}), (\rho_p))$

        \Statex
        \Statex{$\triangleright$ Approve the key and get a lock}
        \State Send \textsc{ProposeLock}($\textbf{value}, \textbf{key\_proof}$) to every party.
        \State Wait for valid \textsc{CheckedLock}($\rho_p$) from $k$ parties
        \State $\textbf{lock\_proof} \gets tcombine((LOCK, \textbf{value}), (\rho_p))$

        \Statex
        \Statex{$\triangleright$ Approve the lock and get a commit}
        \State Send \textsc{ProposeCommit}($\textbf{value}, \textbf{lock\_proof}$) to every party.
        \State Wait for valid \textsc{CheckedCommit}($\rho_p$) from $k$ parties
        \State $\textbf{commit\_proof} \gets tcombine((COMMIT, \textbf{value}), (\rho_p))$

        \Statex
        \Statex{$\triangleright$ Broadcast the commit value}
        \State Send \textsc{SendCommit}($\textbf{value}, \textbf{commit\_proof}$) to every party.
\end{algorithmic}
\end{dianabox}

\begin{dianabox}{\textsc{ViewPartyProtocol}(input value $\vin$, view number $view$, \textbf{key}, \textbf{lock}, \textbf{commit})}
\algoHead{View-based protocol part exclusive to parties}

\begin{algorithmic}
        \Statex{$\triangleright$  Complain if you have not yet decided}
        \If{$\textbf{commit} = \bot$}
            \State Send $(\textsc{Complain})$ to the leader
         \EndIf

        \Statex
        \Statex{$\triangleright$  Choose which value to propose}
        \State Wait for a message \textsc{RequestSuggestion} from the leader
        \If{$\textbf{commit} \neq \bot$}
            \If{This party never sent the commit value to the leader}
                \State Send \textsc{Suggest}($(COMMIT, \textbf{commit})$) to the leader
            \EndIf
            \State Wait for the rest of the view
        \ElsIf{$\textbf{key} \neq \bot$}
             \State Send $\textsc{Suggest}((KEY, \textbf{key}))$ to the leader
        \Else
            \State Send $\textsc{Suggest}(())$ to the leader
        \EndIf

        \Statex
        \Statex{$\triangleright$  Run the retrieval procedure if needed}
        \Upon{Receiving from leader}{\textsc{RunRetrieval}}{}
            \State Run $\textsc{RetrievalParty}(v_{IN}$)
        \EndUpon
        
        \Statex{$\triangleright$  Check and sign the key}
        \State Wait for a valid \textsc{ProposeKey}($v$, ($type, proof$)) from the leader
        \If{$\textbf{lock} \ne \bot$}
            \If{$type = COMBINE$ or $type = KEY$ with a view number strictly less than the $\textbf{lock}$ view number}
                \State Wait for the rest of the view
            \EndIf
        \EndIf
        \State Send \textsc{CheckedKey}($tsign((KEY, v))$) to the leader
        
        \Statex
        \Statex{$\triangleright$  Check and sign the lock value}
        \State Wait for a valid \textsc{ProposeLock}($v$, $proof$) from the leader
        \State $\textbf{key} \gets (v, view, proof)$
        \State Send \textsc{CheckedLock}($tsign((LOCK, v))$) to the leader
        
        \Statex
        \Statex{$\triangleright$  Check and sign the commit value}
        \State Wait for a valid \textsc{ProposeCommit}($v$, $proof$) from the leader
        \State $\textbf{lock} \gets (v, view, proof)$
        \State Send \textsc{CheckedCommit}($tsign((COMMIT, v))$) to the leader
\end{algorithmic}
\end{dianabox}

\subsection{Proofs}
\label{sec:apx:gst alg proofs}

We will prove the correctness and time complexity of the protocol with a series of lemmas:
\begin{lemma}
    During a view, only a single value can get key, lock or commit proofs.
\end{lemma}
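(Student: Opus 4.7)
The plan is to prove this via a standard quorum intersection argument, exploiting that every proof is a $(n, n-t)$-threshold signature. Specifically, a key, lock, or commit proof on a value $v$ is by definition a threshold signature on $(\textsc{key}, v)$, $(\textsc{lock}, v)$, or $(\textsc{commit}, v)$, and by the unforgeability and the threshold parameter $k = n - t$, producing such a proof requires at least $n - t$ distinct parties to have issued valid partial signatures on the corresponding tagged message during the view in question. The latter "during the view" part is essential and relies on the convention announced at the top of Appendix~\ref{sec:apx:alg gst pseudocode} that every outgoing message is implicitly tagged with the current view number and stale-view messages are discarded on receipt.

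First I would establish the key lemma: in any single view, each honest party issues at most one partial signature of each of the three phase types (\textsc{key}, \textsc{lock}, \textsc{commit}). This follows by direct inspection of \textsc{ViewPartyProtocol}: after receiving a single valid \textsc{ProposeKey} from the leader, an honest party emits exactly one \textsc{CheckedKey} and proceeds to wait for \textsc{ProposeLock}; any further \textsc{ProposeKey} messages in that view are simply not consumed by the code. The same holds for the lock and commit phases. Thus, per view and per phase, each honest party contributes its partial signature to at most one value.

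Then I would close the argument by quorum intersection. Suppose toward contradiction that within some single view two distinct values $v_1 \neq v_2$ both obtain, say, key proofs. Let $S_1$ and $S_2$ be the sets of parties whose partial signatures were combined into the two proofs; each has size at least $n - t$. By inclusion--exclusion, $|S_1 \cap S_2| \geq 2(n - t) - n = n - 2t$. Since $t < n/3$ and the number of actually faulty parties satisfies $f \leq t$, we have $|S_1 \cap S_2| - f \geq n - 3t \geq 1$, so at least one honest party lies in $S_1 \cap S_2$. But that party would have issued partial signatures on both $(\textsc{key}, v_1)$ and $(\textsc{key}, v_2)$ in the same view, contradicting the previous paragraph. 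The arguments for lock and commit proofs are identical, with \textsc{key} replaced by \textsc{lock} and \textsc{commit} respectively.

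The only real obstacle is making watertight the claim that an honest party cannot be tricked into signing twice in the same phase of the same view; this is where the implicit view-tag convention must be cited explicitly, since otherwise a byzantine leader could in principle replay or cross-pollinate messages between views to induce a seemingly duplicate signature. Everything else is a routine intersection count that uses $n - t > 2n/3$ crucially.
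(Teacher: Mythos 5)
Your proposal is correct and follows essentially the same route as the paper: a quorum intersection argument showing that two $(n,n-t)$-threshold proofs on different values within one view must share an honest signer, contradicting that an honest party signs at most one value per phase per view. Your arithmetic ($|S_1\cap S_2|\ge n-2t$ and $n-2t-f\ge n-3t\ge 1$) is just a minor rephrasing of the paper's ``$n-2t\ge t+1$ so at least one honest'' step, and your added emphasis on the view-tagging convention is a helpful but not materially different elaboration of the paper's ``an honest party will only sign a single proof in a view.''
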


\begin{proof}
    A key, lock or commit is a threshold signature signed by $k = \lceil \frac{n+t+1}{2} \rceil$ parties. Therefore, if two values $v_1$ and $v_2$ get a key, lock or commit proof, this means that $k$ parties signed each one, so at least $n-2(n-k) = 2k-n$ parties signed both of them.
    We remark that $2k - n = 2\lceil \frac{n+t+1}{2} \rceil - n \geq n+t+1-n \geq t+1$ therefore at least $t+1$ parties signed $v_1$ and $v_2$. This implies that at least one honest party did, which implies that $v_1 = v_2$ because an honest party will only sign a single proof in a view.
\end{proof}

\begin{lemma}\label{lemma:klock-done}
    During a view, if $k-t$ honest parties acquire a lock proof for a value $v$, then any subsequent key, lock or commit proof will be for $v$.
\end{lemma}

\begin{proof}
    We assume that a set $Q$ of at least $k-t = \lceil \frac{n+t+1}{2} \rceil - t = \lceil \frac{n-t+1}{2} \rceil$ honest parties got a lock proof for a value $v_1$ during view $view\_1$.  We will show that all subsequent key proofs generated will be for $v_1$, which implies that only lock proofs and then commit proofs can be generated for $v_1$ thereafter, as it requires a key proof for it. Assume by contradiction that a key proof for a different key $v_2 \ne v_1$ is generated and consider the first view $view_2 \geq view_1$  generating such a proof. We have $view_2 > view_1$ because of the previous lemma. If we got a key proof for $v_2$, it means $k$ parties signed it. Because $|Q| + k \geq \lceil \frac{n-t+1}{2} \rceil + \lceil \frac{n+t+1}{2} \rceil  \geq n+1 > n$, it means that the set of parties in $Q$ and the ones which signed the key proof cannot be disjoint. So there exists a party $p \in Q$ which signed the key proof for $v_2$. By the minimality of $view_2$, $p$ still had the lock proof for $v_1$ at that point. Therefore, for $p$ to accept $v_2$, given that $v_2 \ne v_1$, this value must come with a key proof with view number $view_3 \geq view_1$. Therefore there is a key proof for $v_2$ at time $view_3$ with $view_1 \leq view_3 < view_2$. This contradicts the minimality of $view_2$, hence the result.
\end{proof}

\begin{lemma}\label{lemma:gst-agreement}
    During an entire execution, at most one value will get commit proofs.
\end{lemma}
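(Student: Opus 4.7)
The plan is to argue by induction on the view number, using the previous per-view uniqueness lemma as the base brick. Suppose a commit proof for some value $v$ is created in view $V$. The first step is a counting observation: because the commit proof aggregates $n-t$ CheckedCommit signatures, at least $n-t$ parties received a valid ProposeCommit$(v,\cdot)$ in view $V$ and updated their lock to $(v, V, \cdot)$. Of those, at least $n-2t$ are honest, so at the end of view $V$ there is a set $H$ of $\ge n-2t$ honest parties each holding a lock with value $v$ and view $V$.

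The heart of the proof is the inductive claim: for every view $V' \ge V$, any key proof (and therefore any lock or commit proof) minted in view $V'$ must be for value $v$. The base case $V' = V$ is immediate from the previous lemma, since the existence of a commit proof for $v$ at $V$ forces a key proof for $v$ at $V$, and that view admits only one. For the inductive step, I would first use the hypothesis to propagate the lock invariant: any member of $H$ can only overwrite its lock in some view $V'' \in [V+1, V'-1]$ upon receiving a valid ProposeLock, which requires a key proof in $V''$; by the induction hypothesis that proof is for $v$, so the lock stays at value $v$ with view now in $[V, V'-1]$. Hence at the start of view $V'$ there are still $\ge n-2t$ honest parties whose lock has value $v$ and view $\ge V$.

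Now suppose, for contradiction, that a key proof for some $v' \ne v$ is produced in view $V'$. The leader must obtain $n-t$ CheckedKey signatures on a proposal of type COMBINE or type KEY with some earlier key of view $V_K < V'$. A COMBINE proposal is rejected by every party holding a lock, eliminating all of $H$. A KEY proposal with $V_K < V$ is likewise rejected by every member of $H$ (whose lock view is $\ge V > V_K$). The remaining case $V_K \in [V, V'-1]$ is ruled out directly by the induction hypothesis, because such a key proof would have to be for $v$, not $v'$. In all cases at most $n - (n-2t) = 2t$ signatures are collectable, strictly less than $n-t$ since $t < n/3$, yielding the contradiction and closing the induction. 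Applied to any hypothetical second commit proof at a view $V_2 \ge V$, this forces its value to be $v$, establishing global uniqueness.

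The main obstacle is the lock-preservation step: the induction is not purely about what happens inside a single view, because honest parties may refresh their locks in every intermediate view, and the rejection argument in view $V'$ relies on those locks still carrying value $v$. Threading the induction hypothesis through the lock-update path (key proof $\Rightarrow$ lock proof $\Rightarrow$ lock state) is what ties the per-view bound to an across-view one; once that is in place the remainder reduces to a clean quorum-intersection count.
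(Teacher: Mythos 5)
Your proof is correct and follows essentially the same strategy as the paper's: establish that a quorum of $\geq n-2t$ honest parties acquires a lock for $v$ with view $\geq V$, then use quorum intersection against any later $(n-t)$-sized CheckedKey quorum to rule out a conflicting key proof. The paper phrases this as a minimal-counterexample argument (considering the first view with a conflicting key proof and picking one lock-holder who must have signed), whereas you phrase it as a strong induction with an explicit lock-preservation step and a counting bound $2t < n-t$; these are equivalent, and your version is slightly more explicit about handling the COMBINE case and the propagation of locks across intermediate views.
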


\begin{proof}
    This is a direct consequence of \cref{lemma:klock-done}. For a commit proof to be created, it needs partial signatures from $k$ party and for an honest party to create a partial signature for a commit proof, it needs to receive a lock proof for the commit value during the same view. This implies that $k-t$ honest parties got a lock proof for this value in this view. Therefore, using \cref{lemma:klock-done}, all subsequent commit proofs will be for the same value.
\end{proof}

Because we only decide on a value which gets a commit proof, this implies agreement. We now prove the validity of this protocol:

\begin{lemma}\label{lemma:gst-proof-good}
    If a value has a key, lock or commit proof, then it satisfies strong unanimity.
\end{lemma}

\begin{proof}
    We note that if a value has a commit proof, then it must have a lock proof which then implies that it must have a key proof. So we can assume that a value $v$ has a key proof for it. For a key proof to be valid, it must come either from:
    \begin{itemize}
        \item A previous key proof, which comes from a previous key proof, we recursively look at this key proof until it does not come anymore from a key proof and get to the second case.
        \item A value along with its proof returned from the retrieval sub-protocol. We remark that by definition of the retrieval procedure, the proof must allow any party to locally check that the value satisfies strong unanimity.
    \end{itemize}
\end{proof}

\begin{corollary}\label{lemma:gst-validity}
    If a value gets decided, then it satisfies strong unanimity.
\end{corollary}

Finally, we can look at termination:
\begin{lemma} \label{lemma:gst-honest-commit}
    After GST, assuming $f \leq n-k$, if a view start with a leader who is honest then at the end of the view, either every honest party gets a commit proof or the leader ran the retrieval protocol and it failed (i.e returned $\bot$).
\end{lemma}

\begin{proof}
    After GST, messages are received withing $\Delta$ times. Because we assume $f \leq n-k$, this means that $k$ honest parties can always sign the value as long as it is valid, which will be the case if the leader is honest, and the condition for the key proof is passed. Because a view lasts $11 \Delta$ rounds, this is enough to make sure all messages get sent and received by honest parties before the view ends.

    If the leader starts the view with a commit proof, then it will not run the leader part of the protocol. However, we remark that in this case, every honest party which does not have a commit proof yet will send a \textsc{Complain} message to the leader. In turn, the leader will send its commit proof (at most once) to each party which sends a complain message. Therefore after two rounds, every honest party will have the commit proof.
    
    In the case where the leader does not have the commit proof at the beginning of the view, but receives at some point a valid \textsc{Suggest} message with a commit proof, then it will broadcast it to all parties which ends the proof in this case. Thus we can now assume the leader did not already have a commit proof and did not receive a \textsc{Suggest} message with a commit proof.

    Therefore, the only part left to prove is that all honest parties will sign the suggested value for the leader to get a key proof. Assume by contradiction that an honest party $p$ does not sign a suggested value. It means it has a lock proof and the suggested value has a $COMBINE$ type or comes from a lock proof with a strictly smaller view number. However, if a party has a lock proof for a value $v$ and a given view number $view$, this means that at least $k$ parties got a key proof for $v$ with view number $view$. So at least $k-t = \lceil \frac{n+t+1}{2} \rceil - t = \lceil \frac{n-t+1}{2} \rceil$ honest parties had at some point a key proof for $v$ with view number $view$. Because the leader waits for $k$ parties, he received the suggestion from at least $k$ of them. We remark that $k-t + k = \lceil \frac{n-t+1}{2} \rceil + \lceil \frac{n+t+1}{2} \rceil \geq n+1 > n$. Therefore, the leader received a suggestion from an honest party which had previously a key proof for $v$ with view number $view$. Because a party's key proof can only be replaced by one with a higher view number, this means that the leader got a suggestion from $q$ with a view number at least $view$. Therefore, because the leader is honest, according to the protocol, it will suggest a value with a key proof at least $view$, so $p$ would have accepted it, hence the contradiction.
\end{proof}

\begin{lemma}\label{lemma:gst-decide-2f}
    Assuming, $f \leq n-k$, after at most $2f+1$ views after GST, all honest parties will have decided a value.
\end{lemma}

\begin{proof}
    We consider the first $2f+1$ views after GST. Because $n \geq 2f+1$, each one of them has a different leader and at least $2f+1 \geq f+1$ have an honest leader. We remark by definition of the retrieval procedure that if it is called with $f+1$ different honest leaders, it will succeed at least once. Therefore, among these $f+1$ views with different honest leaders, there is one view where the leader either did not run the retrieval protocol or ran it and it succeeded. In both cases, using \cref{lemma:gst-honest-commit}, all honest parties will have a commit proof and thus have decided by the end of the view.
\end{proof}

\begin{corollary} \label{coro:gst-f-views}
    Assuming $f \leq n-k$, all honest parties will have decided within $\mathcal{O}(f+1)$ rounds.
\end{corollary}

We now cover the message complexity of this protocol:

\begin{lemma}\label{lemma:gst-view-messages}
    When ran for a single view, this protocol has message complexity $\mathcal{O}(n)$.
\end{lemma}

\begin{proof}
    This observation comes from the fact that at most $\mathcal{O}(n)$ messages are sent by honest parties during a view. Indeed, excluding the retrieval procedure, \textsc{ViewPartyProtocol} only sends at most $5$ messages to the leader for each party, and \textsc{ViewLeaderProtocol} only sends at most $6n$ messages to other parties. Moreover, a property of the retrieval protocol is that it only requires $\mathcal{O}(n)$ messages in total and it is only ran at most once per view. Therefore a view has message complexity $\mathcal{O}(n)$.
\end{proof}

From this previous lemma, we get the following corollary, which reveals itself useful when $f > n-k$, because in this case, $f = \Omega(n)$, so $\mathcal{O}(n^2) = \mathcal{O}(n \cdot f)$.
\begin{corollary}\label{coro:gst-bad-complexity}
    When ran for $n$ views, this protocol has message complexity $\mathcal{O}(n^2)$.
\end{corollary}

\begin{lemma}\label{lemma:gst-complexity}
    Assuming $f \leq n- k$, this protocol has message complexity $\mathcal{O}(n\cdot (f+1))$.
\end{lemma}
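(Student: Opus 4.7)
The plan is to classify every post-GST view by the status of its leader, and then separately bound the messages contributed by each class. Concretely, I would partition the views into: (a) honest leader already holding a commit proof at the start of the view, (b) honest leader with $\textbf{commit} = \bot$ at the start, and (c) Byzantine leader.

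Class (a) contributes nothing: the outer loop of \textsc{ByzantineAgreementGST} only invokes \textsc{ViewLeaderProtocol} when the leader's $\textbf{commit}$ is $\bot$, so an honest leader with commit sends no leader-initiated message, and since honest non-leaders react only to leader initiations, they send nothing either. The key step is then the class (b) bound: I would invoke Lemma \ref{lemma:gst-honest-commit} to argue that at the end of any such view every honest party obtains a commit proof, and since $\textbf{commit}$ is monotone (never un-set once assigned), every subsequent view with an honest leader falls into class (a). Hence at most one class (b) view occurs post-GST, and in that single view the leader performs a constant number of $O(n)$-size broadcasts while each honest non-leader replies at most once per phase, for a total of $O(n)$ messages.

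For class (c), I would split the accounting at the unique class (b) view (if it exists). Before it, some honest parties may still lack $\textbf{commit}$, so a Byzantine leader can elicit up to a constant number of replies from each honest party across the four phases of a view, giving $O(n)$ messages per view; since leader assignment is round-robin and at most $f$ parties are faulty, at most $f$ class (c) views can precede the (b) view, contributing $O(nf)$ messages in total. After the (b) view, every honest party has $\textbf{commit}$, so by the ``never sent the commit value to the leader'' check in \textsc{ViewPartyProtocol}, each honest party sends at most one \textsc{Suggest} to any given leader across its entire execution; with $n$ honest parties and $f$ distinct Byzantine leaders, this yields $O(nf)$ further messages regardless of how many view rotations follow.

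Summing the three classes gives $0 + O(n) + O(nf) = O(n + n\cdot f)$ messages sent by honest parties after GST, as required. The main obstacle is the class (b) bound: it is essential that Lemma \ref{lemma:gst-honest-commit} propagates $\textbf{commit}$ to \emph{every} honest party, not just the leader, and that $\textbf{commit}$ is never reset, so that no second honest leader ever runs the full phase protocol. A secondary care-point is the class (c) post-(b) bookkeeping: the ``once per leader'' predicate must be interpreted lifetime-wide rather than per-view, so that the count survives across arbitrarily many view rotations once every honest party has decided.
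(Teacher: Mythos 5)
Your proof is correct and follows essentially the same strategy as the paper's, just recast with a pivot at the unique post-GST class-(b) view in place of the paper's fixed window of the $n$ views following GST; both hinge on Lemma~\ref{lemma:gst-honest-commit} to cap class-(b) at one view and on the lifetime ``once per leader'' rule for the long-run class-(c) tail. The one compressed step is your claim that ``at most $f$ class-(c) views can precede the (b) view'': as phrased, round-robin plus $f$ faults alone does not give this, since class-(a) views could in principle be interleaved indefinitely. What actually closes the gap is the observation (the paper's Corollary~\ref{coro:gst-n-views} in disguise) that any undecided honest party gets a leader turn within the first $n$ views after GST, so the first class-(b) view, if any, lands in that window, which contains at most $f$ Byzantine-led views. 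You should state that link explicitly. A second small omission is the partial view already in progress when GST strikes, which the paper charges a separate $O(n)$ to; this is a constant-factor detail but worth a line.
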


\begin{proof}

    We consider the total number of messages sent after GST. We divide the lifetime of the protocol in two:
    
        \textbf{During the first $2f+1$ views after GST}: using \cref{lemma:gst-view-messages}, each view has message complexity $\mathcal{O}(n)$, so the total complexity is $\mathcal{O}(n \cdot (f+1))$.
        
        \textbf{After the first $2f+1$ views after GST}: because we assume $f \leq n-k$, this implies using \cref{lemma:gst-decide-2f} that all honest parties have a commit value and already decided. We look at messages that can still be sent by honest parties. First, we remark that all $\textsc{Complain}$ messages at this point will come from byzantine parties, as an honest party complain only if it has no commit value, which they all have at that point. Because the leader will only respond at most once to a complain message from a given party $v$ for the entire lifetime of the protocol, the total message complexity from complain messages after $2f+1$ views is at most $f \cdot (n - f) = \mathcal{O}(n\cdot f)$. We remark that other than responding to complain messages, a leader which already has a commit value by the start of its view does not send any other message. Therefore, at this point, a leader sending any non-complain message implies that it is byzantine. Regarding the view party protocol, if an honest party already has a commit proof, the only message it can send is the commit suggestion, which it does at most once per party and only if receiving a $\textsc{RequestSuggestion}$ from the leader, which implies that the leader is byzantine. Therefore, there is at most $f \cdot (n - f) = \mathcal{O}(n\cdot f)$ commit suggestion messages sent after GST. These are the only two messages that can be sent by honest parties when all of them have decided, therefore the message complexity for this part is $\mathcal{O}(n\cdot f)$.
 Taking the messages in these two parts into account, the total message complexity for the protocol is thus $\mathcal{O}(n\cdot f)$.
\end{proof}

\ViewBaMain*

\begin{proof}
    This is a direct consequence of \cref{lemma:gst-complexity,coro:gst-f-views,lemma:gst-proof-good,lemma:gst-validity}.
\end{proof}

\subsection{Complete synchronous protocol}\label{sec:sync-proto-ba}

We now give the pseudocode for our optimal byzantine agreement \textsc{SyncBA} in the synchronous setting with message complexity $\mathcal{O}(nf)$, round complexity $\mathcal{O}(f)$ and resiliency $t < n/2$:

\begin{dianabox}{\textsc{SyncBA}}
\algoHead{Optimal synchronous agreement protocol}
\begin{algorithmic}[1]
    \State $fallback\_proof \gets \bot$
    \For{$n$ views}
        \State Run $\textsc{ViewByzantineAgreement}$
    \EndFor
    \State Let $(value, commit\_proof)$ the commit value and proof \Statex \hspace{\algorithmicindent} obtained by $\textsc{ViewByzantineAgreement}$
    \State Let $\textbf{lock}$ the lock value obtained by $\textsc{ViewByzantineAgreement}$
    \Statex
    
    \Statex \textbf{Round $11 n + 1$:}
    \If{$value = \bot$}
        \State Broadcast $(HELP, tsign(HELP))$ to every party
    \EndIf
    
    \Statex
    \Statex \textbf{Round $11 n + 2$:}
    \UponSimple{Receiving $(HELP, tsign(HELP))$ from party $p$}
        \If{$value \ne \bot$}
            \State Send $(PROOF, value, commit\_proof)$ to $p$
        \EndIf
    \EndUpon
    \UponTrue{Receiving $(HELP, tsign(HELP))$ from $t+1$ different parties}
        \State Make $fallback\_proof$ a $(t+1)$-threshold certificate from the partial signatures
        \State Broadcast $(FALLBACK, fallback\_proof)$ to every party
    \EndUpon
    
    \Statex
    \Statex \textbf{Round $11 n + 3$:}
        \UponSimple{Receiving valid $(PROOF, v, proof)$}
            \State $value \gets v$
            \State $commit\_proof \gets proof$
        \EndUpon
        \UponTrue{Receiving valid $(FALLBACK, proof)$}
            \State $fallback\_proof \gets proof$
            \If{$lock \ne \bot$}
                \State Broadcast $(LOCK, lock)$ to every party
            \EndIf
        \EndUpon
        
    \Statex
    \Statex \textbf{Round $11 n + 4$ and followings:}
        \If{$value = \bot$}
            \If{Received a valid $(LOCK, lock)$}
                \State Let $v$ be the lock value received with the highest view number
                \State $value \gets v$
            \Else 
                \State $value \gets v_{in}$
            \EndIf
        \EndIf
        \If{$fallback\_proof \ne \bot$}
            \State $v \gets \mathcal{A}_{fallback}(value)$
            \If{$commit\_proof = \bot$}
                \State $value \gets v$
            \EndIf
        \EndIf
        \If{The party has not yet decided}
            \State Decide $value$
        \EndIf
\end{algorithmic}
\end{dianabox}

\begin{lemma}\label{lemma:gst-no-fallback}
    If a fallback proof is created, then $f \geq n-k$, and in particular $f = \Omega(n)$.
\end{lemma}

\begin{proof}
    If a fallback proof is created, it implies that $t+1$ parties did not get a commit proof during $n$ views, which implies that one honest party did not decide during $n \geq 2t+1\geq 2f+1$ views.

    However, using \cref{lemma:gst-decide-2f} and the fact that we are in synchrony, so the $GST$ is set at the beginning of the algorithm, one honest party not deciding with $2f+1$ views imply that $f \geq n-k = n - \lceil \frac{n+t+1}{2} \rceil \approx n/4$. Therefore, $f = \Omega(n)$.
\end{proof}

\begin{theorem}\label{theo:sync-complex}
    The protocol has $\mathcal{O}(f)$ round complexity and $\mathcal{O}(n\cdot f)$ message complexity.
\end{theorem}

\begin{proof}
    We look at the two cases ($f < n-k$ and $f \geq n-k$) separately:
    
    \textbf{If $f < n-k \approx n/4$}: then using \cref{lemma:gst-complexity}, the view-baed protocol has $\mathcal{O}(n\cdot f)$ message complexity. Moreover, using \cref{lemma:gst-honest-commit}, because $n \geq 2t+1\geq 2f+1$, all honest parties will have decided within $\mathcal{O}(f)$ rounds. Moreover, it implies that all honest parties will have a proof by the end of the $n$ views. Therefore, no honest party will send a $HELP$ message. Honest parties will answer once to each $HELP$ message, which can thus only be sent the $f$ byzantine parties. This has message complexity $(n-f)\cdot f$. We remark that any other message honest parties can send is behind a fallback proof check or receiving $t+1$ partial signatures for help. Because no honest party send a partial signature for help, the later case will not happen. And because of \cref{lemma:gst-no-fallback}, the former cannot happen either. Therefore, the total message complexity is $\mathcal{O}(n \cdot f)$.
    
    \textbf{If $f \geq n-k \approx n/4$}: then $f = \Omega(n)$. We remark using \cref{coro:gst-bad-complexity} that this part has message complexity $\mathcal{O}(n^2) = \mathcal{O}(n \cdot f)$. Moreover, this runs in $\mathcal{O}(n)$ rounds. In the following rounds, each honest party will do at most $3$ broadcasts and reply to help messages at most once per party. This has message complexity $4n^2 = \mathcal{O}(n f)$. The only remaining step is to run the fallback protocol. We note that it is not guaranteed that all honest parties will run it but \cref{theo:momose-good} guarantees that it will have $\mathcal{O}(n)$ round complexity and $\mathcal{O}(n^2) = \mathcal{O}(nf)$ message complexity anyway. So the total message complexity is $\mathcal{O}(nf)$. Moreover, the view-based approach takes $\mathcal{O}(n)$ rounds, then checking for fallback takes $3$ rounds and finally the fallback takes $\mathcal{O}(n)$ rounds, after which all parties will decide if they have not yet. Therefore, the total round complexity is $\mathcal{O}(n)$.
\end{proof}

\begin{theorem}\label{theo:sync-agree-val}
    This protocol satisfies agreement and validity.
\end{theorem}

\begin{proof}
    We look at different cases depending on whether a commit proof was acquired or not. We note that the adversary has a lot of power in this protocol. It can for example get a commit proof or fallback proof, but keep it for himself on only selectively show it later on. Nevertheless, we prove that we always achieve agreement and validity.

    \textbf{If at least one honest party has a commit proof by the end of the view-based protocol:} Then this party will send its commit value to any other party which didn't have it and sent a help request. Therefore, all honest parties will get a commit proof which can only exists for one value, as proven in \cref{lemma:gst-agreement}. They will also all ignore the output of the fallback algorithm if they run it and decide this same value which satisfies validity because of \cref{lemma:gst-validity}. So we have both validity and agreement.

    We can now assume that no honest party had a commit proof right at the end of the view-based part. So they will all send a help partial signature. Because there are at least $t+1$ honest parties, each honest party will be able to combine the messages into a fallback proof. We note that this implies that all honest parties will run the fallback protocol. We now have two new cases to look at:

    \textbf{If at any point an honest party got a commit proof}: We note that this is possible if the adversary creates a commit proof but only shares it after honest parties send their help message. A commit proof contains a view number. Let $view1$ be the view number of one such commit proof for value $v$. By definition, this proof is made from the partial signatures of $k$ parties who got a lock proof for $v$ at view $view1$. So at least $k-t$ of these parties were honest. We can now use \cref{lemma:klock-done} which guarantees that every subsequent lock proof created will be for $v$. We remark that $k - t = \lceil \frac{n+t+1}{2} \rceil - t > 0$, therefore there must be an honest party $p$ which got a lock proof for $v$ at round $view1$. Because all subsequent lock proofs will also be for $v$ and $p$ sends its lock proof to everyone, it means that all honest party who did not get a commit proof will set their value to $v$ as the lock proof with the highest view number they receive will be for it because they received it from at least $p$. Therefore, all honest parties will join the fallback protocol with input either the commit value $v$ or the highest lock value which is also $v$. Because it satisfies validity, its output will also be $v$. Thus, whether they decide the output of the fallback protocol or their commit value, they will decide $v$ so we have agreement. Finally, $v$ has a commit proof so using \cref{lemma:gst-validity}, it satisfies validity.

    \textbf{If no honest party ever gets a commit proof}: Then all honest parties will decide on the output of the fallback protocol. Because it satisfies agreement, the decided value will also satisfy agreement. Finally we remark that honest parties will join the fallback protocol with either their input value, or a lock value which according to \cref{lemma:gst-proof-good} satisfies validity. So using the fallback protocol validity, the output value also satisfies validity.
\end{proof}

\syncNf*
\begin{proof}
    \textsc{SyncBA} satisfies these properties by \Cref{theo:sync-complex} and \Cref{theo:sync-agree-val}.
\end{proof}

\subsection{Synchronous fallback protocol analysis}
We remark that in the synchronous setting, if $f \geq n/4$, we rely on the synchronous BA protocol by Momose and Ren \cite{momosequadraticba}. We note that this paper never covers the round complexity of their algorithm. For completeness, we prove that it is linear:

\begin{theorem}\label{theo:momose-good}
    The protocol for for byzantine agreement by Momose and Ren \cite{momosequadraticba} with optimal resiliency $t < n/2$ and quadratic communication complexity has $O(n)$ round complexity and $\mathcal{O}(n^2)$ message complexity. This is true even if not all honest parties join the protocol.
\end{theorem}

\begin{proof}
    We remark that their protocol consists of a recursive algorithm. This algorithm, when called on a set of parties of size $m$ larger than some constant $M$, consists of:
    \begin{itemize}
        \item Two recursive calls to the same protocol, with subsets of size $\lceil m / 2 \rceil$ then $\lfloor m /2 \rfloor$
        \item Two multicasts, each taking a single round
        \item Two calls to a graded byzantine agreement (GBA) procedure 
    \end{itemize}
    We remark that the GBA procedure they use takes exactly 4 rounds. Therefore, the total round complexity $T(m)$ for $m$ parties satisfies:
    \begin{align*}
        T(m) = \begin{cases}
            O(1), & \text{if } s \le M, \\
            T(\lfloor m/2 \rfloor) + T(\lceil m/2 \rceil) + O(1), & \text{otherwise.}
        \end{cases}
    \end{align*}
    For a given $m \geq 1$, the recursion depth $h$ satisfies $h \leq \lceil \log(m / M) \rceil \leq \log(m) + 1$. Therefore, we get:
    \begin{align*}
        T(m) &\leq \sum_{i=1}^h 2^{i} \cdot \mathcal{O}(1) \\
        & = (2^{h+1} - 1) \cdot \mathcal{O}(1) \\ 
        &\leq (4 \cdot 2^{\log m} - 1)\cdot \mathcal{O}(1) \\
        &= \mathcal{O}(m)
    \end{align*}
    As a consequence, the total round complexity of their protocol for $n$ parties is $\mathcal{O}(n)$. We also remark that almost the whole communication scheme (which party sends to which and when) is fixed in advance for this protocol. The only exception is in the graded consensus part where a party will perform between $1$ and $4$ multicasts depending on value received, but we remark that this only impact the constant factor in the message complexity, given that one multicast is always done. Therefore, even if not all honest parties join the protocol, the message complexity will still be $\mathcal{O}(n^2)$ while the round complexity will remain $\mathcal{O}(n)$.
\end{proof}

\section{QAB Phase in the partially synchronous setting}
\label{sec:apx:qab phase gst}

In this section, we provide a pseudocode for the QAB Phase procedure in the partially synchronous setting. To be more specific,  \textsc{QABRelayerGST} replaces \textsc{QABRelayer} while \textsc{QABQuorumGST} replaces \textsc{QABQuorum}. We note that the number of committees is $|\mathcal{C}| = \mathcal{O}(\hat{f} \log n)$.

\begin{dianabox}{\textsc{QABRelayerGST}}
\algoHead{Protocol for a relayer $r$ in the $i$-th committee $C \in  \mathcal{C}$ in partial synchrony}
\begin{algorithmic}[1]

    \State $Aggregate \gets \bot$
    \State $Signatures \gets \emptyset$
    \State $Seen \gets \emptyset$
    \UponTrue{Receiving valid $(DISPERSE, v, proof)$}
        \For{party $p \in C$}
            \State Send $(AGGREGATE, v, proof)$ to $p$
        \EndFor
    \EndUpon
    \Statex
    
    \UponSimple{Receiving valid $(ACK, sign)$ from a committee party $p$ for the first time}
        \State $Signatures \gets Signatures \cup \{sign\}$
        \If{$|Signatures| = |C|$}
            \State $Aggregate \gets aggregate(Signatures)$
            \State $Seen \gets \emptyset$
        \EndIf
    \EndUpon
    \Statex
    \For{view number $view$ in $0,1,\ldots$}
        \UponSimple{Receiving valid $(DISPERSE, v, proof)$ from quorum node $q \notin Seen$}
            \If{$view \equiv i \pmod{|\mathcal{C}|}$}
                \State $Seen \gets Seen \cup \{q\}$
                \State Wait $2\Delta$ units of time
                \If{$Aggregate \ne \bot$}
                    \State Send $(COMPLETED, Aggregate)$ to $q$
                \EndIf
            \EndIf
        \EndUpon
    \EndFor

\end{algorithmic}
\end{dianabox}

\begin{dianabox}{\textsc{QABQuorumGST}$(\vin, proof)$}
\algoHead{Protocol for a quorum node $q$ in partial synchrony}
\begin{algorithmic}[1]
    \State $Acknowledged \gets \emptyset$

    \InParallel{}
        \Loop{}
            \If{$Acknowledged = \mathcal{P}$}
                \State We are done disseminating our value 
                \State \Return
            \Else{}
                \State Let $p \in \mathcal{P} \setminus Acknowledged$
                \State Send $(DECIDE, v, proof)$ to $p$
                \State $Acknowledged \gets Acknowledged \cup \{p\}$
                \State Wait $\Delta$ units of time
            \EndIf
        \EndLoop
    \EndParallel
    \Statex
    
    \UponSimple{Receiving valid $(COMPLETED, sign)$ from relayer $r$ of committee $C$ for the first time}
        \State $Acknowledged \gets Acknowledged \cup C$
    \EndUpon
    \Statex

    \For{view number $view$ in $0,1,\ldots$}
        \State $i \gets view\pmod{|\mathcal{C}|}$
        \State Send $(DISPERSE, v, proof)$ to the relayer of the $i$-th committee
    \EndFor
    
\end{algorithmic}
\end{dianabox}

\end{document}